\newcommand {\mm}[1] {\ifmmode{#1}\else{\mbox{\(#1\)}}\fi}
\newcommand{\Rspace}        {\mm{\mathbb{R}}}
\newcommand{\har}           {\mm{\mathbb{H}}}
\newcommand{\chd}            {\mm{\text{CHD}}}
\newcommand{\eps}        {\varepsilon}
\newcommand{\etal}{{et al.}}
\newcommand{\hcd}{{harmonic chain barcode}}
\newcommand{\hcb}{{harmonic chain barcode}}
\newcommand{\hcbs}{{harmonic chain barcodes}}
\newcommand{\para}[1]        {\vspace{4mm}\noindent{\textbf{#1}}}
\newtheorem{lemma}{Lemma}
\newtheorem{theorem}{Theorem}
\newtheorem{definition}{Definition}
\newtheorem{proposition}{Proposition}
\newcommand{\update}[1]{{\color{black}{#1}}}
\crefname{equation}{Eq.}{Eqs.}
\crefname{figure}{Fig.}{Figs.}
\crefname{tabular}{Tab.}{Tabs.}
\crefname{section}{Sec.}{Secs.}
\crefname{appendix}{App.}{Apps.}
\theoremstyle{plain}
\title{Harmonic Chain Barcode and Stability}
\author{Salman Parsa\thanks{DePaul University, s.parsa@depaul.edu}
\and  Bei Wang\thanks{University of Utah, beiwang@sci.utah.edu.}}
\date{}
\begin{document}

\maketitle
\begin{abstract}
The persistence barcode is a topological descriptor of data that plays a fundamental role in topological data analysis. 
Given a filtration of the space of data, a persistence barcode tracks the evolution of its homological features.
In this paper, we introduce a new type of barcode, referred to as the canonical barcode of harmonic chains, or {\hcb} for short, which tracks the evolution of harmonic chains. 
As our main result, we show that the {\hcb} is stable and it captures both geometric and topological information of data. 
Moreover, given a filtration of a simplicial complex of size $n$ with $m$ time steps, we can compute its {\hcb} in $O(m^2n^{\omega} + mn^3)$ time, where $n^\omega$ is the matrix multiplication time.  
Consequently, a {\hcb} can be utilized in applications in which a persistence barcode is applicable, such as feature vectorization and machine learning. 
Our work provides strong evidence in a growing list of literature that geometric (not just topological) information can be recovered from a persistence filtration.

\end{abstract}

\section{Introduction}
\label{sec:introduction}

There are two primary tasks in topological data analysis (TDA)~\cite{dey2022computational,edelsbrunner2010computational,zomorodian2005topology}: reconstruction and inference. 
In a typical TDA pipeline, the data is given as a point cloud in $\Rspace^N$.  
A ``geometric shape'' $K$ in $\Rspace^N$ is reconstructed from the point cloud, usually as a simplicial complex, and $K$ is taken to represent the (unknown) space $X$ from which the data is sampled. 
$X$ is then studied using $K$ as a surrogate for properties that are invariant under invertible mappings (technically, homeomorphisms). 
The deduced ``topological shape'' is not specific to the complex $K$ or the space $X$, but is a feature of the homeomorphism type of $K$ or $X$. 
For example, a standard round circle has the same topological shape as any closed loop such as a knot in the Euclidean space. 
Although topological properties of $X$ alone are not sufficient to reconstruct $X$ exactly, they are among a few global features of $X$ that can be inferred from the data sampled from $X$. 

A persistence barcode~\cite{CarlssonZomorodianCollins2004,Ghrist2008} (or equivalently, a persistence diagram~\cite{Chazal, cohen2005stability, edelsbrunner2002topological}) captures the evolution of homological features in a filtration constructed from a simplicial complex $K$. 
It consists of a multi-set of intervals in the extended real line, where the start and end points of an interval (i.e.,~a bar) are the birth and death times of a homological feature in the filtration. 
Equivalently, a persistence diagram is a multi-set of points in the extended plane, where a point in the persistence diagram encodes the birth and death time of a homological feature. 
A main drawback of homological features is that they ignore the geometric information in $K$, even though they are defined on a set of simplicial chains with distinct geometric features. 

In this paper, we aim to recover geometric information from a filtration. 
We introduce a type of barcode, referred to as the \emph{canonical barcode of harmonic chains} or \emph{\hcb}, which tracks the evolution of harmonic chains in a filtration.  
A bar in a {\hcd} is associated with a single geometric feature, namely, a harmonic chain, and a {\hcb} tracks its birth and death in a filtration. 
To achieve this association, we need less choices (in terms of cycle representatives) than we need in ordinary persistence barcode. 
The main point of using harmonic chains is that a homology class contains a unique harmonic chain. 
Consequently, the homology class can be given the geometric shape of its harmonic chain without making a specific choice, giving rise to an interpretable, geometric feature of the data. 

A persistence barcode is stable~\cite{cohen2005stability}, and this stability is crucial for data science applications. The stability means that small changes in the data imply only small changes in the barcode. We show that our canonical {\hcb} is also stable.

Defining a canonical barcode of harmonic chains that is stable requires some care. To do so, we start by reviewing the computation of a persistence barcode.
Imagine that we are tracking the homology of a filtration of $K$ (with real coefficients $\Rspace$), constructed by adding simplices, one at a time.
At the filtration time $t$, we have maintained a set of homology \emph{basis}, that is, a maximal linearly independent set of homology cycles (of each dimension). 
Then a simplex $\sigma$ is added at time $t+1$ that destroys a homology class, thus reducing the dimension of a homology group. In other words, at time $t+1$, a linear combination $\tau$ of our basis elements becomes homologous to the boundary of the added simplex $\sigma$. 
The \emph{Elder Rule}~\cite{edelsbrunner2010computational} says that at time $t+1$, we destroy the youngest of the basis elements that appear in $\tau$. Other bars persist in the filtration.
Now, is it possible to canonically assign specific cycles (representatives) to a bar in the persistence barcode? 
Such an assignment is desirable, since then each bar would correspond to a unique geometric feature. 
However, as the Elder Rule is applied, we need to know what simplex is going to be inserted in the future to be able to choose a basis and a cycle that survives the insertion of simplices. 
Thus just by looking into the past, we cannot define stable and persistent geometric features corresponding to the persistence barcode.

There are two levels of choices to recover geometric features from a persistence barcode: first, we choose a basis for persistent homology in a consistent way across the filtration; and second, we choose a cycle inside a basis element of homology. 
Both levels of choices involve choosing among significantly distinct geometric features of data to represent the same bar in the barcode. 
A {\hcb} is defined using harmonic cycles, and immediately removes the second level of choices in a natural way, since there is always a unique harmonic cycle in a homology class.  

\para{Contributions.} Our contributions are as follows: 
\begin{itemize}[noitemsep,leftmargin=*]
\item We present canonical {\hcb}, a new type of barcode constructed from the same filtration used by a persistence barcode. Unlike a persistence barcode, a {\hcb} is defined by utilizing a global analysis of the filtration w.r.t. time, and it captures geometric (not just topological) information of data. 
\item We prove that the canonical {\hcd} is stable in the same setting as the persistence barcode. 
\item We introduce a \emph{harmonic interleaving distance} for the stability proof, which is of independent interest. 
\item We provide an algorithm for computing the {\hcd} that runs in $O(m^2n^{\omega} + mn^3) = O(m^2n^3)$ time, if the input complex is of size $n$, the filtration of the complex has $m$ time steps, and  $n^\omega$ is the matrix multiplication time. 
\end{itemize}
We expect a {\hcb} to be used in applications in which a persistence barcode is applicable, such as feature vectorization and machine learning. 
Our work also provides strong evidence in a growing list of literature (e.g.,~\cite{BubenikHullPatel2020,ChacholskiGiuntiJin2023}) that geometric (not just topological) information can be recovered from a persistence filtration.

\section{Related Work}
\label{sec:related-work}

Harmonic chains were first studied in the context of functions on graphs. 
They were identified as the kernel of the Laplacian operator on graphs~\cite{Kirchhoff1847}. 
The graph Laplacian and its kernel are important tools in studying graph properties, see~\cite{Merris1994,mohar1991laplacian} for surveys. 
Eckmann~\cite{eckmann1944harmonische} introduced the higher-order Laplacian for simplicial complexes, and proved the isomorphism of harmonic chains and homology. 
Guglielmi \etal~\cite{guglielmi2023quantifying} studied the stability of higher-order Laplacians. 
Horak and Jost~\cite{horak2013} defined a weighted Laplacian for simplicial complexes. 
Already their theoretical results on Laplacian~\cite{horak2013} anticipated the possibility of applications, as the harmonic chains are thought to contain important geometric information.
This has been validated by recent results that use curves of eigenvalues of Laplacians in a filtration in data analysis~\cite{chen2022, wang2020persistent}. 
The Laplacian was applied to improve the mapper algorithm~\cite{mike2019}, and for coarsening triangular meshes~\cite{keros2023spectral}. 
The persistent Laplacian~\cite{memoli2022persistent} and its stability~\cite{liu2023algebraic} is an active research area. 
Due to the close relation of harmonic chains and Laplacians, harmonic chains could find applications in areas that Laplacians have been used.

Computing reasonable representative cycles for persistent homology is also an active area of research. 
Here, usually an optimality criterion is imposed on cycles in a homology class to obtain a unique representative. 
For a single homology class, a number of works~\cite{chambers2022complexity, chen2011hardness, DeyHiraniKrishnamoorthy2010} consider different criteria for optimality of cycles. 
For persistent homology, Dey et al.~\cite{dey2020computing} studied the hardness of choosing optimal cycles for persistence bars. 
Volume-optimal cycles have been computed for persistent homology~\cite{Obayashi}. The harmonic chains have been used as representative of homology classes for studying the brain~\cite{lee2019}.
Furthermore, De Gregorio et al.~\cite{DeGregorioGuerraScaramuccia2021} used harmonic cycles in a persistent homology setting to compute the persistence barcode.  
Lieutier~\cite{Lieutier} studied the harmonic chains in persistent homology classes, called persistent harmonic forms.

The most relevant work is the one by Basu and Cox~\cite{basu2022harmonic}. 
They had a similar goal as we do, namely, to associate geometric information to each bar in a persistence barcode in order to obtain a more interpretable data feature. 
To that end, they introduced the notion of ``harmonic persistent barcode'', by associating a subspace of harmonic chains to each bar in the ordinary persistence barcode.
Then, they proved stability for their harmonic persistent barcode, by considering subspaces as points of a Grassmannian manifold and measuring distances in the Grassmannian. 
Their work was used to study multi-omics data~\cite{GurnariGuzman-SaenzUtro2023}. 
Different from the approach of Basu and Cox, we define a distinct barcode from the persistence barcode, which is stable in the same sense as the persistence barcode.
That is, the bottleneck distances between a pair of {\hcbs} is upper bounded by the harmonic interleaving distance.

\section{Background}
\label{sec:background}

In this section, we review standard notions of homology and cohomology with real coefficients and harmonic chains. 
We include some basic algebraic definitions to draw an understanding that helps with the rest of the paper.

\subsection{Homology and Cohomology}
Let $K$ be a simplicial complex. 
We give the standard orientation to the simplices of $K$. 
That is, we order the vertices of $K$ and assign to any simplex $\sigma \in K$ the ordering of its vertices induced from an ordering of the vertices. 
This is also the ordering used in all our examples. 
From now on, we think of a simplex $\sigma$ to be supplied with the standard orientation. 
We write simplices as an ordered set of vertices, for instance, if $s=v_0v_1\cdots v_{i}$, and $\sigma = \{v_0, v_1, \ldots, v_i \}$ then by convention $s=\sigma$ if the sign of the ordering $v_0v_1\cdots v_{i}$ agrees with the sign of the standard orientation, otherwise $s=-\sigma$.

For any integer $p \geq 0$, the $p$-dimensional \textit{chain group} of $K$ with coefficients in $\Rspace$, denoted $C_p(K)$, is an $\Rspace$-vector space generated by the set of $p$-dimensional (oriented) simplices of $K$. Let $K_p$ denote the set of $p$-simplices of $K$ and $n_p=|K_p|$. By fixing an ordering of the set $K_p$, we can identify any $p$-chain $c\in C_p(K)$ with an ordered $n_p$-tuple with real entries.
For each $p$, we fix an ordering for the $p$-simplices (once and for all), and identify $C_p(K)$ and $\Rspace^{n_p}$. 
The standard basis of $\Rspace^{n_p}$ corresponds (under the identification) to the basis of $C_p(K)$ given by the simplices with standard orientation.

The $p$-dimensional boundary matrix $\partial_p: C_p(K) \rightarrow C_{p-1}(K)$ is defined on a simplex basis element by the formula
\[ \partial (v_0v_1 \cdots v_p)= \sum_{j=0}^{p} {(-1)^q (v_0\cdots v_{q-1}v_{q+1}\cdots v_p) }.\]
In the right hand side above, in the $q$-th term $v_q$ is dropped. The formula guarantees the crucial property of the boundary homomorphism: $\partial_{p} \partial_{p+1} = 0$. This simply means that the boundary of a simplex has no boundary. The sequence $C_p(K)$ together with the maps $\partial_p$ define the \textit{simplicial chain complex} of $K$ with real coefficients, denoted $C_{\bullet}(K)$. The group of $p$-dimensional \textit{cycles}, denoted $Z_p(K)$, is the kernel of $\partial_p$. The group of $p$-dimensional boundaries, denoted $B_p(K)$, is the image of $\partial_{p+1}$. The $p$-dimensional homology group of $K$, denoted $H_p(K)$, is the quotient group $Z_p(K)/B_p(K)$. 
As a set, this quotient is formally defined as $\{z+B_p(K) \mid  z\in Z_p(K)\}$. The operations are inherited from the chain group. In words, the homology group is obtained from $Z_p(K)$ by setting any two cycles which differ by a boundary to be equal. All these groups are $\Rspace$-vector spaces.

Consider the space $\Rspace^{n_p}$. The cycle group $Z_p(K) \subset C_p(K) = \Rspace^{n_p}$ is a subspace, that is, a hyperplane passing through the origin. Similarly, the boundary group $B_p(K)$ is a subspace, and is included in $Z_p(K)$. The homology group is the set of parallels of $B_p(K)$ inside $Z_p(K)$. Each such parallel hyperplane differs from $B_p(K)$ by a translation given by some cycle $z$. These parallel hyperplanes partition $Z_p(K)$. The homology group is then isomorphic to the subspace perpendicular to $B_p(K)$ inside $Z_p(K)$. 
The dimension of the $p$-dimensional homology group is called the $p$-th $\Rspace$-Betti number, denoted $\beta_p(K)$. 

Simplicial \textit{cohomology} with coefficients in $\Rspace$ is usually  defined by the process of dualizing. This means that we replace an $i$-chain by a linear functional $C_p(K) \rightarrow \Rspace$, called a $p$-dimensional \textit{cochain}. 
The set of all such linear functions is a vector space isomorphic to $C_p(K)$, called the cochain group, denoted $C^p(K)$, which is a dual vector space of $C_p(K)$.  
For the purposes of defining harmonic chains, we must fix an isomorphism. 
We take the isomorphism that sends each standard basis element of $\Rspace^{n_p}$, corresponding to $\sigma$, to a functional which assigns 1 to $\sigma$ and 0 to other basis elements, denoted $\hat{\sigma} \in C^p(K)$. Any cochain $\gamma \in C^p(K)$ can be written as a linear combination of the $\hat{\sigma}$. Therefore, it is also a vector in $\Rspace^{n_p}$. The fixed isomorphism allows us to identify $C^p$ with $\Rspace^{n_p}$, and hence to $C_p$. Therefore, any vector in $\Rspace^{n_p}$ is at the same time a chain and a cochain.

The \textit{coboundary matrix} $\delta^p: C^p(K)\rightarrow C^{p+1}(K)$ is the transpose of $\partial_{p+1}$, $\delta^p = \partial_{p+1}^T$. It follows that $\delta_{p+1}\delta_p=0$, and we can form a \textit{cochain complex} $C^{\bullet}(K)$. The group of \textit{$p$-cocycles}, denoted $Z^p$ is the kernel of $\delta^p$. The group of \textit{$p$-coboundary} is the image of $\delta_{p-1}$. The $p$-dimensional \textit{cohomology group}, denoted $H^p(K)$ is defined as $H^p(K) = Z^p(K)/B^p(K)$. All of these groups are again vector subspaces of $C^p(K)$ and thus of $\Rspace^{n_p}$.
It is a standard fact that homology and cohomology groups with real coefficients are isomorphic.

\subsection{Harmonic Cycles} 
Recall that we identify chains with cochains. 
Therefore, we can talk about the coboundary of the cycles $Z_p(K)$. The \textit{harmonic $p$-cycles}, denoted $\har_p(K)$ is the group of those cycles which are also cocylces. Considered as subspaces of $\Rspace^{n_p}$, we have $\har_p(K) = Z^p(K) \cap Z_p(K)$.

\begin{lemma}[\cite{eckmann1944harmonische}]
    $\har_p(K)$ is isomorphic to $H_p(K)$. In other words, each homology class has a unique harmonic cycle in it.
\end{lemma}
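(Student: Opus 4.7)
The plan is to exploit the fact that, under the identification of chains with cochains via the fixed standard basis, the inner product on $\Rspace^{n_p}$ lets us interpret the coboundary as the transpose of the boundary. Then a standard linear algebra identity ($\ker A^T = (\mathrm{im}\, A)^\perp$ for any real matrix $A$) will allow us to reinterpret harmonic cycles as an orthogonal decomposition of $Z_p(K)$.

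First, I would spell out that $Z^p(K) = \ker \delta^p = \ker \partial_{p+1}^T$, and that the latter equals the orthogonal complement of $\mathrm{im}\,\partial_{p+1} = B_p(K)$ inside $C_p(K) = \Rspace^{n_p}$ under the standard inner product. Hence
\[
\har_p(K) \;=\; Z_p(K) \cap Z^p(K) \;=\; Z_p(K) \cap B_p(K)^\perp.
\]
Since $B_p(K) \subseteq Z_p(K)$, the inner product restricted to $Z_p(K)$ gives the orthogonal direct sum decomposition
\[
Z_p(K) \;=\; B_p(K) \,\oplus\, \bigl(Z_p(K) \cap B_p(K)^\perp\bigr) \;=\; B_p(K) \,\oplus\, \har_p(K).
\]

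Next I would conclude the isomorphism with homology. The quotient map $Z_p(K) \twoheadrightarrow Z_p(K)/B_p(K) = H_p(K)$ restricted to the summand $\har_p(K)$ is a linear map whose kernel is $\har_p(K) \cap B_p(K) = 0$ (by orthogonality) and whose image is all of $H_p(K)$ (since every cycle differs from its orthogonal projection onto $\har_p(K)$ by an element of $B_p(K)$, hence represents the same homology class). This gives the desired isomorphism $\har_p(K) \cong H_p(K)$, and as a byproduct the uniqueness statement: each homology class meets $\har_p(K)$ in exactly one element, namely the orthogonal projection of any representative onto $\har_p(K)$.

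The only real subtlety is the identification $\ker \partial_{p+1}^T = (\mathrm{im}\,\partial_{p+1})^\perp$, which relies on having a fixed inner product coming from the chosen basis; the paper has already committed to this identification by fixing the isomorphism $C^p(K) \cong C_p(K)$ sending $\hat\sigma \mapsto \sigma$, so this step is essentially bookkeeping. I do not expect any serious obstacle; the argument is a three-line application of orthogonal decomposition once the chain/cochain identification is in place.
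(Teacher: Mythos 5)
Your argument is correct and complete: identifying $Z^p(K)=\ker\partial_{p+1}^{T}=B_p(K)^{\perp}$ and decomposing $Z_p(K)=B_p(K)\oplus\har_p(K)$ orthogonally is exactly the standard Hodge-type argument behind this lemma, which the paper itself does not reprove (it cites Eckmann) but sketches in its background discussion when it says the homology group is isomorphic to the subspace perpendicular to $B_p(K)$ inside $Z_p(K)$. So your proof matches the paper's intended approach, and no gaps remain.
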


Harmonic cycles enjoy certain geometric properties. As an example we mention the following. For a proof see~\cite[Proposition 3]{deSilvaVejedomo}.

\begin{proposition}
    Let $\alpha \in C^p(K)$ be a cochain. There is a unique solution $\bar{\alpha}$ to the least-squares minimization problem $$ \text{argmin}_{\bar{\alpha}} \{ || \bar{\alpha}||^2 \; | \; \exists \gamma \in C^{p-1}(K);  \alpha = \bar{\alpha}+\delta \gamma \}.$$ Moreover, $\bar{\alpha}$ is characterised by the relation $\partial {\bar{\alpha}}=0$.
\end{proposition}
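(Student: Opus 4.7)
The plan is to recognize this as orthogonal projection of the origin onto a closed affine subspace of $\Rspace^{n_p}$, and then identify the orthogonal complement of the coboundaries as the cycles. The two claims of the proposition (existence/uniqueness of $\bar{\alpha}$, and its characterization by $\partial\bar{\alpha}=0$) will then fall out together.

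First, I would rewrite the feasibility condition ``$\exists \gamma \in C^{p-1}(K)$ with $\alpha = \bar{\alpha} + \delta\gamma$'' as $\bar{\alpha} \in \alpha + B^p(K)$; that is, $\bar{\alpha}$ ranges over the coset of $\alpha$ modulo the $p$-coboundaries. Under the paper's fixed identification $C^p(K) \cong \Rspace^{n_p}$ with its standard inner product, this coset is a nonempty closed affine subspace of a finite-dimensional inner product space, so the minimum of $\|\cdot\|^2$ on it exists and, by strict convexity of $\|\cdot\|^2$, is attained at a unique point.

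Next, I would characterize that minimizer via the standard projection theorem: the unique minimum-norm element of $\alpha + B^p(K)$ is the one orthogonal to the linear subspace $B^p(K)$. Concretely, if $\bar{\alpha} \in \alpha + B^p(K)$ satisfies $\bar{\alpha} \perp B^p(K)$, then for any other element $\bar{\alpha} + \beta$ of the coset (with $\beta \in B^p(K)$) the Pythagorean identity gives $\|\bar{\alpha} + \beta\|^2 = \|\bar{\alpha}\|^2 + \|\beta\|^2 \geq \|\bar{\alpha}\|^2$, with equality only when $\beta = 0$. This proves existence, uniqueness, and the orthogonality characterization of $\bar{\alpha}$ in one stroke.

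Finally, I would translate ``$\bar{\alpha} \perp B^p(K)$'' into the algebraic condition $\partial \bar{\alpha} = 0$. Since $B^p(K) = \operatorname{im}(\delta^{p-1})$ and, under the fixed basis, $\delta^{p-1} = \partial_p^T$, the linear-algebra identity $(\operatorname{im} A^T)^\perp = \ker A$ with $A = \partial_p$ yields $B^p(K)^\perp = \ker \partial_p$. Thus $\bar{\alpha}$ is the minimizer if and only if $\bar{\alpha} \in \alpha + B^p(K)$ and $\partial \bar{\alpha} = 0$, which is exactly the stated characterization. There is no real obstacle here: the argument is a repackaging of standard inner-product-space facts, and the only step worth noting explicitly is the identification $\delta^{p-1} = \partial_p^T$, which is built into the chain/cochain identification fixed in the Background section.
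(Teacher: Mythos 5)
Your proof is correct: it is the standard orthogonal-projection argument, rewriting the feasible set as the coset $\alpha + B^p(K)$, using the Pythagorean/projection theorem for existence and uniqueness, and identifying $(\operatorname{im}\,\partial_p^T)^\perp = \ker\partial_p$ to get the characterization $\partial\bar{\alpha}=0$. The paper itself does not prove this proposition but defers to the cited reference, whose proof is essentially this same Hodge-type projection argument, so there is nothing to criticize beyond noting that your ``projection theorem'' step implicitly uses the orthogonal decomposition $C^p(K) = B^p(K)\oplus B^p(K)^\perp$ to guarantee the coset contains an orthogonal element --- which is immediate in finite dimensions.
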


In other words, the harmonic chain is the chain with the least squared-norm in a cohomology class.

\subsection{Persistent Homology}

Persistent homology tracks the changes in homology over time (or any other  parametrization).  
We start with a filtration $F$ of a simplicial complex $K$. A filtration assigns to each $r\in \Rspace$ a subcomplex $K_r \subset K$, in such a way that, if $r \leq s$, then $K_r \subseteq K_s$. 
Since $K$ is finite, there is a finite set of values $t_1, \ldots, t_m \in \Rspace$ where the subcomplex $F_{t_i}$ changes. Setting $K_i = K_{t_i}$, $K_0 = \emptyset$, $K_i \hookrightarrow K_{i+1}$ the inclusions, the filtration $F$ can be written as
\begin{equation}
\label{eq:fcomplex}
    \emptyset = K_0 \hookrightarrow K_1 \hookrightarrow \cdots \hookrightarrow K_{m-1} \hookrightarrow K_m=K.
\end{equation}

Applying homology functor to Eqn.~\eqref{eq:fcomplex}, we obtain a sequence of homology groups and connecting homomorphisms (linear maps), forming a \textit{persistence module}:
\begin{equation}\label{eq:fhomology}
 H_p(K_0) \xrightarrow{f_p^{t_0,t_1}} H_p(K_1) \xrightarrow{f_p^{t_1,t_2}}   \cdots \xrightarrow{f_{p}^{t_{m-2},t_{m-1}}} H_p(K_{m-1}) \xrightarrow{f_{p}^{t_{m-1},t_m}} H_p(K_m).   
\end{equation} 
In general, a persistence module $M$ at dimension $p$ is defined as assigning to each $r\in \Rspace$ a vector space $M_r$, and for each pair $s<t$, a homomorphism $f_p^{s,t}: M_s \xrightarrow{} M_t$. For all $s<t<u$, the homomorphism are required to satisfy linearity, that is, $f_p^{t,u} \circ f_p^{s,t}= f_p^{s,u}$. 

A cycle $z \in Z_p(K)$ appears for the first time at some index $b \in \{t_0,\ldots,t_m\}$, where it creates a new homology class in $K_b$ not previously existing. We say that the homology class is \textit{born} at time $b$. 
The homology class then lives for a while until it \textit{dies} entering $K_d$, when it lies in the kernel of the map induced on homology by the inclusion $K_b \subset K_d$. 
For $s\leq t$, let $f^{s,t}_p: H_p(K_s) \rightarrow H_p(K_t)$ denote the map induced on the $p$-dimensional homology by the inclusion $K_s \subset K_t$. 
The image of this homomorphisms, $f^{s,t}_p(H_p(K_s)) \subset H_p(K_t)$, is called the $p$-dimensional \textit{$(s,t)$-persistent homology group}, denoted $H_p^{s,t}$. The group $H_p^{s,t}$, which in our case is also a vector space, consists of classes which exist in $K_s$ and survive until $K_t$. The dimensions of these vector spaces are the \textit{persistent Betti numbers}, denoted $\beta^{s,t}_p$.

An \textit{interval module}, denoted $I=I(b,d)$, is a persistent module of the form
\[ 0 \rightarrow \cdots \rightarrow 0 \rightarrow \Rspace \rightarrow \Rspace \rightarrow 0 \rightarrow \cdots \rightarrow 0.\]
In the above, $\Rspace$ is generated by a homology class and the maps connecting the $\Rspace$s map generator to generator. 
The first $\Rspace$ appears at index $b$ and the last appears at index $d$. 
For any $b \leq r < d$, $I_r = \Rspace$; and for other $r$, $I_r=0$. 
Any persistence module can be decomposed into a collection of interval modules in a unique way \cite{LuoHenselman-Petrusek2023}. 
Consequently, a homology class that lives in time period $[b,d)$ can be written as a linear combination of intervals.
Each interval module is determined by the index of the first $\Rspace$, or the time it is born and the index of the last $\Rspace$ where it dies, which can be $\infty$.
The collection of $[b,d)$ for all interval modules is called the \textit{persistence barcode}. 
When plotted as points in an extended plane, the result is the equivalent \textit{persistence diagram}. 
The persistence barcode therefore contains a summary of the homological changes in the filtration. 

\para{Interleaving and Stability.}
Let $F$ and $G$ be two filtrations over the complexes $K$ and $K'$ respectively, and let the maps connecting the complexes of filtration be $f^{s,t}$ and $g^{s,t}$ respectively, for all $s,t \in \Rspace, s\leq t $. Let $C(F)$ and $C(G)$ be the corresponding filtrations of chain groups, and $H(F)$ and $H(G)$ the corresponding persistence modules. We denote the maps induced on chain groups and homology groups also by $f^{s,t}$ and $g^{s,t}$ respectively, for all $s,t \in \Rspace, s\leq t $.

\begin{definition}
\label{def:chain-interleaving}
Let $F$ and $G$ be two filtrations over the complexes $K$ and $K'$ respectively. 
An \textit{$\eps$-chain-interleaving between $F$ and $G$} (or an $\eps$-interleaving at the chain level) is given by two sets of homomorphisms $\{\phi_\alpha : C(K_\alpha) \xrightarrow{} C(K'_{\alpha+\eps}) \}$ and $\{\psi_\alpha : C(K'_{\alpha}) \xrightarrow{} C(K_{\alpha+\eps}) \}$, such that 
\begin{enumerate}
    \item $\{ \phi_\alpha \}$ and $\{\psi_\alpha \}$ commute with the maps of the filtration, that is, for all $\alpha,t \in \Rspace$, $g^{\alpha+\eps,\alpha+\eps + t} \phi_\alpha = \phi_{\alpha+t} f^{\alpha, \alpha+t}$ and $f^{\alpha+\eps,\alpha+\eps+t} \psi_\alpha = \psi_{\alpha+t} g^{\alpha, \alpha+t}$; 
    \item The following diagrams commute: 
    \begin{equation}
      \begin{tikzcd}
  C(K_\alpha)
    \ar[r,"f^{\alpha, \alpha+\eps}"]
    \ar[dr, "\phi_\alpha", very near start, outer sep = -2pt]
  & C(K_{\alpha+\eps})
    \ar[r, "f^{\alpha+\eps, \alpha+2\eps}"]
    \ar[dr, "\phi_{\alpha+\eps}", very near start, outer sep = -2pt]
  & C(K_{\alpha+2\eps})
  \\
  C(K'_\alpha)
    \ar[r, swap,"g^{\alpha, \alpha+\eps}"]
    \ar[ur, crossing over, "\psi_\alpha"', very near start, outer sep = -2pt]
  & C(K'_{\alpha+\eps})
    \ar[r, swap,"g^{\alpha+\eps, \alpha+2\eps}"]
    \ar[ur, crossing over, "\psi_{\alpha+\eps}"', very near start, outer sep = -2pt]
  & C(K'_{\alpha+2\eps}).
  \end{tikzcd}
    \end{equation}
\end{enumerate}

The \emph{chain interleaving distance} is defined to be 
\begin{equation}
d_{CI}(F, G) := \inf \{\eps\geq 0 \mid \text{there exists an $\eps$-chain interleaving between } F\; \text{and}\; G\}.
\end{equation}
\end{definition}

The standard notion of $\eps$-interleaving~\cite{ChazalCohenSteinerGlisse2009}, denoted $d_I(F,G)$, is defined analogously to our definition above, however, it is defined on the persistence modules $H(F)$ and $H(G)$ on the homology level, rather than the filtration of chain groups.  For our purposes, we require the existence of a chain interleaving. In the application to sublevel set filtrations, which is the main setting in which stability is proved~\cite{cohen2005stability}, the chain interleaving exists, therefore, this strengthening of the interleaving does not hurt the sublevel set stability arguments.

The stability of persistence barcode (or diagram) is a crucial property for applications. It expresses the fact that small changes in data lead to small changes in persistence barcodes. The interleaving distance provides the measure of change in data, and we measure the distance between barcodes using the bottleneck distance. Let $D=Dgm(F)$ and $D'=Dgm(G)$ denote the persistence barcodes (or diagrams) of filtrations $F$ and $G$. 
Recall that the diagram is a multi-set of points and contains all the diagonal points.
The \textit{bottleneck distance} is defined as 
\[
d_B(D, D') = \text{inf}_\gamma \text{sup}_{p \in D} || p - \gamma(p) ||_\infty,
\]
where $\gamma$ ranges over all bijections between $D$ and $D'$ and $||\cdot ||_\infty$ is the largest absolute value of the difference of coordinates, or considered as bars, the largest distortion of an endpoint when matched to the bar of $D'$.   
We refer to~\cite{ChazalCohenSteinerGlisse2009} for the proof of the following \Cref{theorem:persistence-stability}.

\begin{theorem}
\label{theorem:persistence-stability}
    Let $F$ and $G$ be filtrations defined over (finite) complexes $K$ and $K'$, respectively. Then $$ d_B( Dgm(F), Dgm(G)) \leq d_I(F,G).$$
\end{theorem}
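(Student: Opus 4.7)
The plan is to prove this as an instance of the algebraic stability theorem for persistence modules: if two pointwise finite-dimensional persistence modules $V$ and $W$ over $\Rspace$ are $\eps$-interleaved in the standard homology-level sense, then $d_B(Dgm(V), Dgm(W)) \leq \eps$. Taking $V = H(F)$ and $W = H(G)$ and then taking the infimum over all admissible $\eps$ immediately yields $d_B(Dgm(F), Dgm(G)) \leq d_I(F, G)$.

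I would carry out the algebraic stability proof via the induced matching approach of Bauer--Lesnick. Since $K$ and $K'$ are finite, the modules $H_p(F)$ and $H_p(G)$ are pointwise finite-dimensional, so the structure theorem yields unique interval decompositions $H_p(F) = \bigoplus_i I(b_i, d_i)$ and $H_p(G) = \bigoplus_j I(b'_j, d'_j)$; these summands are precisely the bars in the two barcodes. Given an $\eps$-interleaving $(\phi, \psi)$, the goal is to produce a matching that pairs each \emph{long} interval $I(b, d)$ of $H_p(F)$ (one with $d - b > 2\eps$) with a unique interval $I(b', d')$ of $H_p(G)$ satisfying $|b - b'| \leq \eps$ and $|d - d'| \leq \eps$, while the \emph{short} intervals (of length $\leq 2\eps$) are left unmatched. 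Short intervals are within $L^\infty$-distance $\eps$ of the diagonal, so the augmented diagrams admit the required bijection.

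The technical core, which I expect to be the main obstacle, is establishing the injectivity of this assignment and the claimed endpoint bounds. The cleanest route is via a persistent-rank lemma: the persistent Betti number $\beta^{s,t}_p(V)$ equals the number of intervals of $V$ with birth $\leq s$ and death $> t$, and the interleaving relations $\psi_{\alpha+\eps} \circ \phi_\alpha = f^{\alpha, \alpha+2\eps}$ (on homology) and its symmetric counterpart imply the rank inequalities $\beta^{s,t}_p(V) \leq \beta^{s-\eps, t+\eps}_p(W)$ and the reverse. A careful inclusion-exclusion over rectangles $(s, t)$ then converts these rank bounds into the injective matching on long intervals. Once the rank lemma is in place, the remaining bookkeeping --- pairing each long interval with its matched partner, sending each unmatched short interval to its projection on the diagonal, and verifying that the supremum $L^\infty$-cost is bounded by $\eps$ --- is routine.
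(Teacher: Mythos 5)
The paper does not actually prove this theorem: it is quoted as a known result and deferred to~\cite{ChazalCohenSteinerGlisse2009}, whose proof (like~\cite{cohen2005stability}) proceeds by interpolating between the two modules and reducing to the case where they are very close relative to the minimal gap between critical values, combined with a Box Lemma. So any complete proof is acceptable here, and your overall plan --- algebraic stability for pointwise finite-dimensional persistence modules, applied to $H(F)$ and $H(G)$ and followed by an infimum over $\eps$ --- is a legitimate and standard route.

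The weak point is the step you describe as converting the rank inequalities into the matching. From the interleaving relations one does get $\beta^{s,t}_p$-type inequalities, and inclusion-exclusion over rectangles yields the Box Lemma: the number of diagram points of $H(F)$ in a rectangle is bounded by the number of points of $H(G)$ in the $\eps$-inflated rectangle. But passing from these counting inequalities to an \emph{injective} $\eps$-matching of the long bars is not ``routine bookkeeping'': one must verify Hall's condition for arbitrary unions of (possibly overlapping) $\eps$-squares, and single-box counts do not immediately give that. This is exactly why the classical proofs interpolate (an ``easy bijection'' for very close modules plus the triangle inequality along the interpolation), and why Bauer--Lesnick's induced matching theorem works with images and kernels of the interleaving morphisms rather than with ranks alone. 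So you should either commit fully to the induced matching machinery (in which case the rank lemma is unnecessary), or add the interpolation step, or supply a genuine Hall-type argument (as in Bjerkevik-style proofs); as written, the crux of the theorem is precisely the step you have glossed over.
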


Please see~\cite{cohen2005stability, ChazalCohenSteinerGlisse2009} for more on stability. We also mention that in the above theorem we can replace $d_{CI}$ in place of $d_I$, since existence of an $\eps$-chain-interleaving implies existence of an $\eps$-interleaving on the homology level. The resulting theorem would be weaker.

\para{Conventions.}
Since we work with chains and apply boundary and coboundary operators at different times, we write $\delta_t(c)$ and $\partial_t(c)$ to mean the coboundary and boundary of the chain $c$ at time $t$, respectively, where we consider $c$ to be a chain in the final complex $K$ present at time $t$. 
In this paper, we usually omit the subscript of a homology group; we always fix a dimension and do not write it if there is no danger of ambiguity.
Moreover, we use the words persistence barcode and persistence diagram interchangeably. 
A persistence diagram contains all the diagonal points $[0,0)$, and a persistence barcode contains infinite number of intervals of length $0$. 
We use $F$ and $G$ to denote two filtrations over the complexes $K$ and $K'$ respectively, and we denote the maps connecting the complexes of filtration by $f^{s,t}$ and $g^{s,t}$ respectively, for all $s,t \in \Rspace, s\leq t $. We denote the maps induced on chain groups and homology groups also by $f^{s,t}$ and $g^{s,t}$ respectively, for all $s,t \in \Rspace, s\leq t$.

\section{A First Attempt: Unstable Harmonic Chain Barcode}
\label{sec:unstable-barcode}

In this section, we discuss our first attempt at constructing a {\hcb} from a persistence barcode. 
Our approach is a natural one, however, we show that it does not lead to a stable barcode of harmonic chains. 
Nevertheless, our first attempt sheds light on how we might search for one that is stable, as discussed in~\cref{sec:stable-barcode}.

To avoid confusions, we use the following notations. 
Ordinary persistent homology gives rise to \emph{persistence barcodes}. 
An interval belonging to a persistence barcode is referred to as a \emph{persistence bar}. 
In our setting, we introduce \emph{\hcbs}. 
An interval belonging to a {\hcb} is called a \emph{harmonic bar}. 

\subsection{Constructing a Harmonic Chain Barcode from a Persistence Barcode}
\label{sec:first-attempt}

Starting from a persistence barcode, we assume there exists a choice of a homology class for each persistence bar, such that at each time $t$, the classes of existing bars at time $t$ form a basis of the homology group of $K_t$. There are many such choices, one such choice is represented by a function $\phi$ that maps persistence bars to homology classes as follows.
Let $B = [b, d)$ be a single interval (persistence bar) in the persistence barcode. 
$\phi$ maps each persistence bar $B$ to a chosen homology class, denoted by $\phi(B)$. The start of the interval $B$ is the birth of $\phi(B)$. 
Let $z_b$ be the harmonic cycle in the homology class $\phi(B)$. 
Starting from $b$, we follow through time until a time $s_1$ such that $\delta_{s_1}(z_b)\neq 0$. This implies that $\delta_{s_1}(z_b) = \sum \ell_j \sigma_j$, for some $\ell_j \in \Rspace$, where $\sigma_j$ are the simplices inserted at time $s_1$. 
At this point, the cycle $z_b$ dies entering $s_1$, and there is a bar $[b,s_1)$ in the {\hcb}. If $s_1 \neq d$, there has to be another cycle $z_1$ such that $[z_1] = [z_b]$ at time $s_1$, and $z_1$ is a cocycle in $K_{s_1}$. 
We therefore generate a new harmonic bar with a start time of $s_1$, and associate the cycle $z_1$ to it, and so on. After processing all (finite filtration) times, we obtain a {\hcb} subordinate to $\phi$. 

The above algorithm decomposes the persistence bars into a type of harmonic bars. The decomposition is guided by the choice of basis elements for each persistence bar by $\phi$. Each harmonic bar represents a unique harmonic cycle, hence, contains geometric information. 

\subsection{Example}

\cref{fig:first-attempt-example} presents an example where we compute the {\hcb} starting with the cycles for the persistence bars obtained with the standard matrix reduction algorithm for persistence homology. 
The matrix reduction algorithm to obtain persistence barcodes runs in the worst case $O(n^3)$, where $n$ is the number of simplices~\cite{Morozov2005}. 
We use the time stamps as the names of vertices and edges (from $1$ to $15$), and $t_i$ as the names of triangles ($t_1$ to $t_4$). 
We give simplices the standard orientation determined by the ordering of the vertices given by the insertion time. 
The procedure we describe in~\cref{sec:first-attempt} gives an algorithm that runs in $O(n^4)$ to extract a {\hcb} from the persistence barcode, since it makes a pass of the filtration in $O(n)$ time with a persistence barcode computed in $O(n^3)$ time. 

\begin{figure}[!ht]
    \centering
    \includegraphics[scale = 0.75]{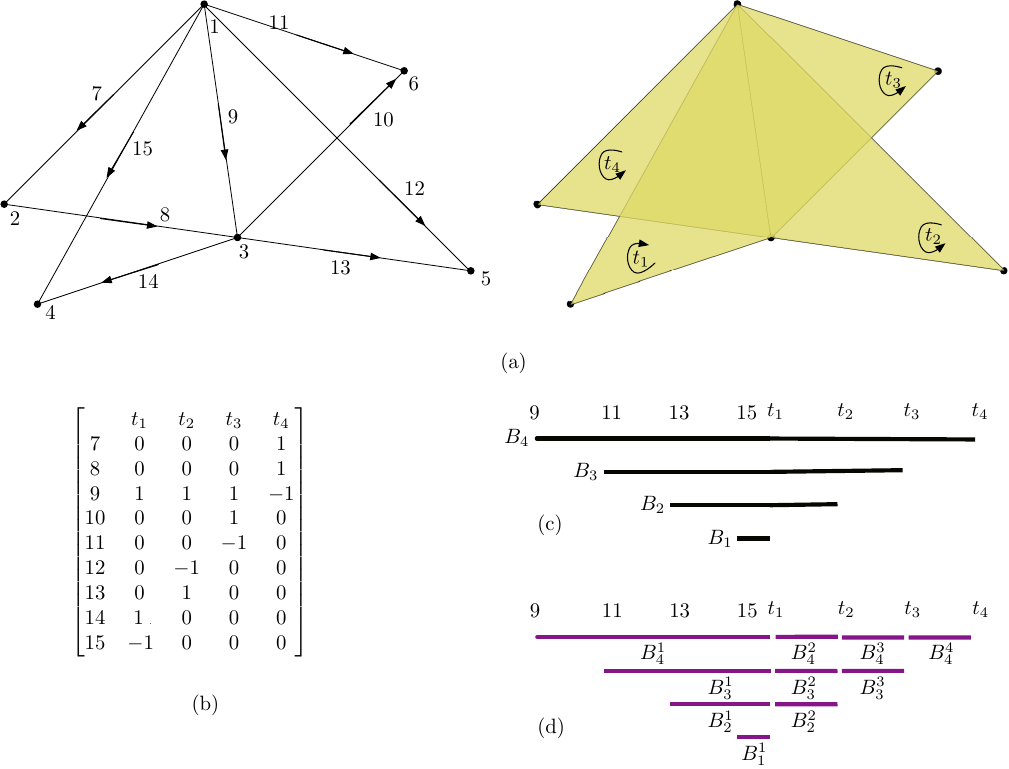} 
    \caption{(a) A filtration of a simplicial complex, the time that a simplex is inserted is written next to the simplex. For the four triangles we have times $t_1 < t_2 < t_3 < t_4$. We also use these time stamps as  names of the simplices. (b) A part of the boundary matrix of the complex which is relevant to 1-dimensional homology. The boundary matrix is already reduced, therefore, the generators of persistent homology computed by the standard matrix reduction algorithm are given by columns of the matrix. (c) The 1-dimensional ordinary persistence barcode. (d) The 1-dimensional {\hcb} based on the cycles computed by the matrix reduction algorithm.}
    \label{fig:first-attempt-example}
\end{figure}


Following the filtration given in~\cref{fig:first-attempt-example}(a), the 1-dimensional persistence barcode tracks the births and deaths of 1-cycles. After all vertices and edges have been inserted at time $15$, the triangles $t_1$, $t_2$, $t_3$, and $t_4$ destroy the 1-cycles created at time $15$, $13$, $11$, and $9$, respectively. This gives rise to a persistence barcode (in black) that consists of intervals \update{$B_1=[15,t_1),B_2=[13,t_2),B_3=[11,t_3),B_4=[9,t_4)$}, respectively, shown in~\cref{fig:first-attempt-example}(c).  
\update{After matrix reduction in~\cref{fig:first-attempt-example}(b), 
$B_1$ is represented by the 1-cycle $z_1=9+14-15$, 
$B_2$ is represented by $z_2=13-12+9$,
$B_3$ is represented by $z_3 = 9+10-11$, 
and $B_4$ is represented $z_4 = 7+8-9$.}

To obtain harmonic bars (in purple) we sweep from left to write. 
Right before time $t_1$, $z_i$ also serve as the harmonic representatives.  
At time $t_1$, all of the basis cycles $z_i$ get a coboundary from the triangle $t_1$. 
The homology class of \update{$z_1$} becomes trivial as it is destroyed by the triangle $t_1$, whereas other three homology classes remain non-trivial. 
As the first harmonic bar we observe, $B_1^1 = [15, t_1)$ is identical to the persistence bar $B_1$ and it is represented by a harmonic 1-cycle \update{$z_1^1=z_1$}. 
Now, if we take the cycle representative \update{$z_4$}, the homology class it represents remains non-trivial at time $t_1$, whereas the harmonic 1-cycle $z_4^1=z_4$ has been destroyed creating a harmonic bar $B_4^1=[9,t_1)$.   
We then need to find a new harmonic cycle in its homology class, by finding $x$ such that \update{$\delta(z_4 + \partial x) =0$} at time $t_1$. 
This implies $\delta(z_4) = - \delta \partial x$. 
Since only triangle $t_1$ exists at this time, $\delta(z_4) = \delta(7+8-9) = \delta(7)+\delta(8)- \delta(9) = 0+0-t_1=-t_1.$ 
We also have $\delta \partial t_1 = \delta(15-14+9) = \delta(15)-\delta(14)+\delta(9) = t_1 + t_1 + t_1 = 3t_1$. 
Therefore we can set $x = \frac{1}{3} (t_1)$. 
Then the new harmonic chain is $z^2_4 = z_4 + \frac{1}{3} \partial (t_1)$. 
Similarly, persistence bars $B_2$ and $B_3$ are split at time $t_1$, giving rise to harmonic bars with representatives $z^2_2 = z_2 - \frac{1}{3} \partial (t_1)$ and $z^2_3 = z_3 - \frac{1}{3} \partial (t_1)$. 

At time $t_2$, \update{$z_2^2$} dies as it is destroyed by the triangle $t_2$. 
\update{$z_4^2$} gets a coboundary $-t_2$ and $\delta t_1$ gets a coboundary $t_2$, hence $\delta (z_4^2) = -\frac{2}{3} t_2$. 
A new harmonic cycle is $z_4^3 = z_4 + \frac{1}{4}(\partial t_1 + \partial t_2)$. 
Similarly, $z_3^2 = z_2 - \frac{1}{4}(\partial t_1 + \partial t_2)$. 
At time $t_3$, the cycle $z_3^3$ and its harmonic homologous cycles die. $z_4^3$ also becomes non-harmonic and will be replaced by another harmonic cycle that will die at $t_4$. 

Although in this example, each harmonic chain becomes non-harmonic after the insertion of a simplex, this is not a general phenomenon. 
The harmonic bars can be longer than one time step.

\subsection{Instability}

In a {\hcb} subordinate to some choice of basis element $\phi$ constructed above, the harmonic bars can appear and disappear without much restrictions.
Therefore, it is expected that such {\hcbs}
are not stable. 
Here we demonstrate this instability by presenting an example that shows when the basis elements are given by the standard reduction algorithm of persistence homology, the resulting {\hcb} is not stable. 

\begin{figure}[!ht]
    \centering
    \includegraphics{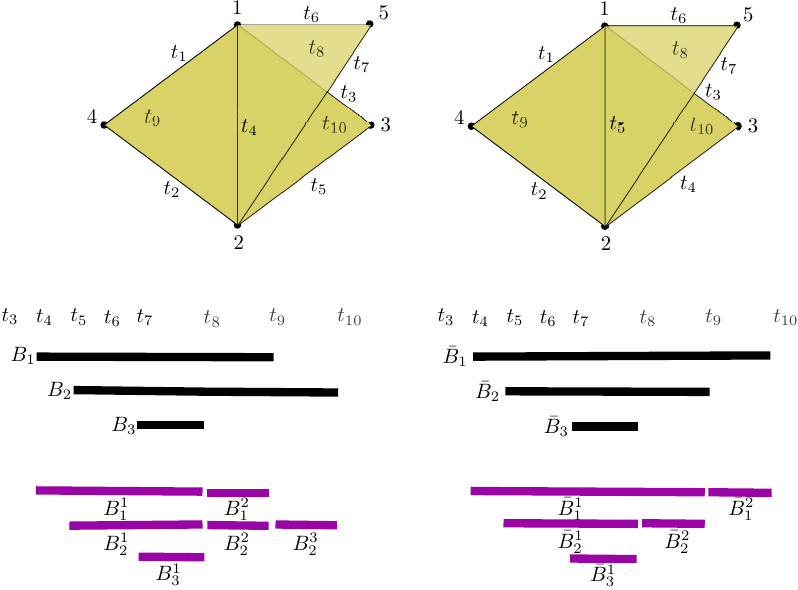}
    \caption{Changing the order of two simplices might effect an arbitrary large change in the harmonic barcode. The filtration differ by exchange of two edges. The ordinary persistence barcode is in black, whereas the {\hcb} is in purple (constructed using the basis given by the standard matrix reduction algorithm).}
    \label{fig:instability}
\end{figure}

\cref{fig:instability} shows two filtrations whose only difference is the exchange of times when edges $12$ and $23$ are inserted. 
In this example, vertices are labeled 1 to 5, and higher-dimensional simplices (edges and triangles) are labeled by $t_i$ (for some $i$). 
That is, edges appear at times $t_1$ to $t_7$, and triangles appear at times $t_8,t_9$ and $t_{10}$, respectively. 
The persistence barcodes are shown in black. 
The unstable {\hcbs} constructed from our first attempt using the basis given by the standard matrix reduction algorithm are shown in purple. 

In the left filtration, the matrix reduction algorithm for persistence homology gives the following cycle representatives,   
$z_1 = -t_1+t_2+t_4$, 
$z_2 = -t_3+t_4+t_5$,
and $z_3=t_4-t_6+t_7$, 
for persistence bars $B_1$, $B_2$, and $B_3$, respectively. 
In contrast, the harmonic bars are shown in purple. 
Before time $t_8$, $z_1^1=z_1, z_2^1=z_2$ and $z_3^1=z_3$ serve as harmonic cycle representatives. 
At time $t_8$, $z_3$ dies as the triangle $t_8$ enters the filtration. 
And the addition of triangle $t_8$ causes $z_1$ and $z_2$ to have coboundary. 
A new cycle homologous to $z_1$ and harmonic at $t_8$ can be found by finding a solution to $\delta (z_1 + \partial x) =0$. 
This means $\delta \partial x = -\delta z_1 = -t_8$. 
Since $\delta \partial t_8 = 3t_8$, we can take $x=-\frac{1}{3} t_8$. 
The new harmonic cycle is then $z_1^2 = z_1 - \frac{1}{3}(\partial t_8)$. At time $t_9$, $z_1$ dies and clearly $z_1^2$ also dies. At time $t_8$, $z_2$ dies and similarly we get a new cycle $z_2^2 = z_2 - \frac{1}{3}(\partial t_8)$. This cycle dies at $t_9$ and a new cycle replacing it dies at $t_{10}$. 

In the right filtration, we again have cycles computed by the matrix reduction algorithm, from top to bottom, $\bar{z}_1 = t_1 - t_2 - t_3 +t_4$, $\bar{z}_2 = -t_1+t_2+t_5$ and $\bar{z}_3 = t_5 - t_6 + t_7$ serve as representatives for $\bar{B}_1, \bar{B}_1$, and $\bar{B}_3$, respectively. The harmonic bars are obtained using these basis elements in a similar fashion. At time $t_8$, the cycle $\bar{z}_1$ remains harmonic since it is not incident to $t_8$. 
As before, $\bar{z}_2$ becomes non-harmonic at $t_8$ and is replaced by a harmonic cycle that dies at $t_9$. 
The cycle $\bar{z}_3$ dies also at $t_8$. 

A matching provided by the stability of the persistence diagram needs to match $\bar{B}_2$ with $B_1$, and $\bar{B}_1$ with $B_2$. 
Now, even if $\bar{z}_1^1=\bar{z}_1$ dies immediate at $t_9$, its bar will have a surplus of length of at least $t_9 - t_8$ and this can be made arbitrarily large compared to $t_5-t_4$. 
Hence it cannot be matched to any bar on the left {\hcb} without incurring an arbitrary high cost to the matching. 
In summary, our first attempt at constructing {\hcbs} leads to unstable ones; more constraints need to be imposed to construct stable {\hcbs}, which is discussed in~\cref{sec:stable-barcode}.

\section{Canonical Harmonic Chain Barcode and Stability}
\label{sec:stable-barcode}

Fix a homological dimension $p$ and consider a filtration $F$ and a homology class $h \in H_p(K_b)$ born at time $b$ which dies at time $d$ when entering $K_d$. 
The rank (dimension) of $p$-th homology group increases at time $b$. The class $h$ contains a unique harmonic cycle $z$. We say that the harmonic cycle $z$ is \emph{born} at time $b$. When a persistent homology class dies (i.e., becomes a boundary), then clearly the corresponding harmonic cycle also \emph{dies}. Contrary to the ordinary persistent homology, this is not the only situation where a harmonic cycle dies. 
In brief, a harmonic cycle dies whenever it bounds a chain or gets a non-zero coboundary. 
A crucial property, which is not hard to verify, is that when a cycle is no longer a cocycle, it could never become a cocycle in the future. Moreover, when a harmonic cycle dies, it is because a relation is introduced in homology (i.e. a homology class dies); the classes then change their harmonic representatives.  

Recall that the space of harmonic $p$-cycles at time $t$, which we denote $\har_p(K_t)$, is isomorphic with the $p$-th homology $H_p(K_t)$. A cycle might be a harmonic cycle when it is first created, and then it might fail to be harmonic at a later time: for example, it might get a non-zero coboundary,  or become a boundary. Alternatively, it may remain harmonic up to infinity. 
When working with cycles, we usually fix a dimension $p$ and  suppress the dimension in the notation $H:=H_p, Z:=Z_p$, etc. 
Moreover, for simplicity of exposition, we treat $\infty$ as a fixed large number, so that all our interval lengths can be compared to each other.

\begin{definition}
\label{def:span}
For a cycle $z \in Z(K)$, its \emph{harmonic span}, denoted $span(z)$, is an interval of the form $[s,t)$ in the filtration during which $z$ is a (non-trivial) harmonic cycle. 
If a cycle is non-harmonic at birth, its span is empty and has length $0$. If a cycle is harmonic in $K$, its span has length $\infty$. The length of a span is denoted by $|span(z)|$.   
\end{definition}

\Cref{def:span} assigns an interval to each cycle of the final complex $K$. This would be an assignment of an interval to each vector in a $\Rspace$-valued vector space. When talking about independence of cycles, we always consider them as vectors in the cycle space $Z(K)$. 

\begin{definition}
\label{def:independent-cycles}
Let $\big(z_i\big):=z_1,z_2,\ldots \in Z(K)$ be a (finite) sequence of cycles. 
We say $\big(z_i\big)$ is a \emph{sequence of independent cycles} if, for each $i>1$, $z_i$ does not belong to the subspace generated by $z_1, \ldots, z_{i-1}$ in $Z(K)$. 
If $\big(z_i\big)$ is a sequence of independent cycles, the sequence 
$\big(span(z_i)\big): =span(z_1), span(z_2), \ldots$ is called a sequence of \emph{independent spans}, and the sequence 
$\big(|span(z_i)|\big):= |span(z_1)|, |span(z_2)|, \ldots$ is called a sequence of \emph{independent span-lengths}.  
The set of spans forming a sequence of independent spans is called a \emph{set of independent spans}.
\end{definition}

The number of cycles in a sequence of independent cycles equals the dimension of $Z(K).$
A sequence of independent span-lengths is a sequence of real numbers. We can compare any two sequences using the lexicographic ordering, starting from the first number in the sequence. 
That is, if $S_1 = \big(|span(z_i)|\big)$ and $S_2 =  \big(|span(y_i)|\big)$ are two sequences of independent span-lengths, we say $S_1 > S_2$ if and only if there is an index $k$ such that for all $j<k$, $|span(z_i)|=|span(y_i)|$, and $|span(z_k)| > |span(y_k)|$.  

\begin{theorem}
\label{thm:uniqueness}
    Let $\big(\lambda_i\big) = \lambda_1, \lambda_2, \ldots $ be the sequence of independent span lengths which is lexicographical maximal. Then there is a unique set of independent spans $\{J_1, J_2, \ldots\}$ whose sequence of lengths equals $\big(\lambda_i\big)$.  
\end{theorem}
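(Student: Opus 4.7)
The plan is to prove uniqueness by showing that for every interval $I$ in the filtration, $\#\{i : J_i \supseteq I\} = \dim U_I$, where $U_I := \bigcap_{r \in I} \har_p(K_r) \subseteq Z_p(K)$ is a subspace depending only on $F$. Since this quantity is defined from the filtration without reference to the sequence $(z_i)$, once the equality is established, Möbius inversion over the poset of intervals (ordered by inclusion) recovers each $\#\{i : J_i = I\}$, giving uniqueness of the multi-set $\{J_i\}$ and hence of the set of independent spans.

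The central tool is a \emph{triangular bound}: for a lex-maximal sequence $(z_i)$ and any cycle $w = \sum c_i z_i$ with $i^* := \min\{i : c_i \neq 0\}$, one has $|span(w)| \leq \lambda_{i^*}$. This follows directly from lex-maximality, since swapping $z_{i^*}$ for $w$ gives another sequence of independent cycles whose length profile cannot exceed $(\lambda_i)$ in lex order. As an immediate consequence, $\{z_i : J_i \supseteq I\}$ is an independent family contained in $U_I$, giving the easy direction $\#\{i : J_i \supseteq I\} \leq \dim U_I$.

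For the reverse inequality, I would argue by downward induction on $|I|$. Assuming the claim for all $I' \supsetneq I$, the inductive hypothesis gives $\mathrm{span}\{z_i : J_i \supsetneq I\} = \sum_{I' \supsetneq I} U_{I'}$, of known dimension. If $\#\{i : J_i \supseteq I\} < \dim U_I$, then subtracting the contributions of $\{z_i : J_i \supseteq I\}$ from a witness in $U_I$ yields a nonzero $w'' \in U_I$ whose expansion in $(z_i)$ is supported only on indices $j$ with $J_j \not\supseteq I$. The triangular bound forces $\lambda_j \geq |I|$ for each such $j$, so $w'' \in V_{\geq |I|}$, and the inductive hypothesis further forces $w''$ to have nonzero image in the quotient $V_{\geq |I|}/V_{> |I|}$. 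Establishing a direct-sum decomposition $V_{\geq |I|}/V_{> |I|} = \bigoplus_{|I'| = |I|} \bar{U}_{I'}$ of this quotient into the images of the length-$|I|$ subspaces $U_{I'}$ then pinpoints a specific basis element whose swap into the sequence would increase some $\lambda_j$, contradicting lex-maximality.

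The principal obstacle is verifying this direct-sum decomposition: equivalently, showing that any nontrivial relation $\sum_{I'} u_{I'} = 0$ with $u_{I'} \in U_{I'}$ and $|I'| = |I|$ is impossible. The intended argument is to pick a supporting interval $I_0'$ (for instance, the one with the leftmost left endpoint) and a time $t$ lying in every other supporting $I'$ but outside $I_0'$; at this $t$, the relation $u_{I_0'} = -\sum_{I' \neq I_0'} u_{I'}$ forces $u_{I_0'}$ to belong to $\har_p(K_t)$, contradicting $t \notin I_0' = span(u_{I_0'})$. A careful combinatorial analysis of the endpoint structure of distinct length-$|I|$ intervals---combined with Eckmann's identity $\dim \har_p(K_t) = \beta_p(K_t)$ to rule out degenerate configurations where no such $t$ exists (which turn out to force one of the $u_{I'}$ to have an empty span)---completes the argument and hence the induction.
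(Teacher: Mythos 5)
Your strategy is genuinely different from the paper's (the paper proves uniqueness directly, by induction on the dimension of subspaces $A\subseteq Z(K)$ and a case analysis of the first index at which two purported realizations of $(\lambda_i)$ differ), and your target identity $\#\{i: J_i\supseteq I\}=\dim U_I$ is stronger than the theorem and plausibly true (it is essentially what the paper's algorithm in the computation section implicitly relies on). But as written the argument has a genuine gap, concentrated exactly where you flag it, plus two smaller unjustified steps. The smaller ones: (a) ``the triangular bound forces $\lambda_j\ge|I|$ for each such $j$'' does not follow from your triangular bound, which only controls the \emph{minimal} support index; what works is an exchange at the \emph{maximal} support index $j=\max S$ (replacing $z_j$ by $w''$ still yields a sequence of independent cycles with the same spans of initial segments, so lex-maximality gives $\lambda_j\ge|span(w'')|\ge|I|$) --- fixable, same spirit; (b) ``the inductive hypothesis forces $w''$ to have nonzero image in $V_{\ge|I|}/V_{>|I|}$'' is not supplied by the inductive hypothesis: it amounts to knowing $U_I\cap V_{>|I|}\subseteq \mathrm{span}\{z_i: J_i\supseteq I\}$, which is essentially another instance of the statement being proved.

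The principal gap is the direct-sum decomposition $V_{\ge\ell}/V_{>\ell}=\bigoplus_{|I'|=\ell}\bar U_{I'}$. What has to be excluded is a relation $\sum_{I'}u_{I'}\in V_{>\ell}$ with each $u_{I'}\notin V_{>\ell}$, not an exact relation $\sum_{I'}u_{I'}=0$: exact relations certainly exist (take a cycle $z$ whose span contains two distinct length-$\ell$ intervals $I_1,I_2$ and set $u_{I_1}=z$, $u_{I_2}=-z$), so the statement you propose to prove is literally false, and your endpoint argument ignores the $V_{>\ell}$-terms, which is where all the difficulty lies. Even for the corrected (quotient) statement, the move ``pick a time $t$ lying in every other supporting $I'$ but outside $I_0'$'' breaks down as soon as the other supporting intervals are pairwise disjoint: equal-length intervals need not share a common time, so no such $t$ exists and the claimed contradiction is not reached. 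Finally, any repair must explicitly use the fact that a cycle which has acquired a nonzero coboundary never becomes harmonic again (spans are anchored at birth): with only the formal property ``the set of times at which a cycle is harmonic is an interval,'' both the decomposition and the theorem itself fail. For instance, take $V=\Rspace^2$ spanned by $x$ and $u$, with $H_t=\langle u\rangle$ for $t\in[1,2]$, $H_t=\langle x\rangle$ for $t\in[4,7]$, $H_t=\langle x-u\rangle$ for $t\in[9,10]$, and $H_t=0$ otherwise; every vector's harmonic set is an interval, yet the lex-maximal length sequence $(4,2)$ is realized by the two different span sets $\{[4,7],[1,2]\}$ and $\{[4,7],[9,10]\}$. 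This configuration is impossible for genuine harmonic chains precisely because $x-u$ exists as a cycle at time $4$, is non-harmonic there, and could never become harmonic at time $9$; your sketch never invokes this property, so the proposal as it stands cannot close, whereas the paper's induction-on-subspaces argument compares two realizations directly and sidesteps these issues.
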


\begin{proof}
We prove a stronger claim below by induction on the dimension of the subspaces of Z(K):
\begin{itemize}[noitemsep,leftmargin=*]
\item 
For each subspace $A \subset Z(K)$, there is a unique set of independent spans of cycles of $A$ realizing a lexicographical maximal (lex.-maximal) sequence of span lengths of cycles of $A$.
\end{itemize}
 
The base case for the induction is to prove the theorem for any subspace $A \subset Z(K)$ of dimension 1. $A$ is generated by any of its non-zero elements $z$. If $\delta(z)\neq 0$, then $\delta(cz)\neq 0$ for all $c\neq 0$. Moreover, if $\partial(d) = z$, $\partial (cd) = cz$. Therefore, all of $A$ either die at the same time or survive until the end. In each case, the spans of all cycles are equal, and there is exactly one span.

Assume the statement is true for all subspaces of dimension at most $m$, and we consider a subspace $A \subset Z(K)$ of dimension $m+1$. For the sake of contradiction, let $I_1, I_2, \ldots$ and $J_1, J_2, \ldots$ be two different sequences of independent spans realizing $\lambda_1, \lambda_2, \ldots$ which is the lex.-maximal sequence of independent span lengths of cycles in $A$. 

Let $k$ be the first index such that $I_k \neq J_k$. Let $\lambda_{l_1}, \ldots, \lambda_{l_s}$ be all the $\lambda_i$ such that $\lambda_{l_j} = \lambda_k$ (for $j=1,\ldots,s$). 
If the two sets of spans $\{I_{l_i}, i =1,\ldots,s \}$ and $\{ J_{l_i}, i=1, \ldots, s \}$ are equal, then, as sets, the two sequences would not differ at spans of length $\lambda_k$. We can therefore assume that these two sets $\{I_{l_i}\}$ and $\{J_{l_i}\}$ are distinct. For each $k$, let $I_k$ be the span of a cycle $x_k$ and $J_k$ be the span of a cycle $y_k$. $\big(x_k\big)$ and $\big(y_k\big)$ are sequences of independent cycles by definition.

We divide the argument into three cases:
\begin{itemize}[noitemsep,leftmargin=*]
\item Case (1) $l_s$ is less than $\dim(A)$. This means that there are independent intervals with length smaller than $\lambda_k$. The subspace $S_x \subset A$ generated by $x_1, \ldots, x_{l_s}$ and $S_y \subset A$ generated $y_1, \ldots, y_{l_s}$ are of dimension at most $m$. Since the sets $\{I_{l_i}, i =1,\ldots,s \}$ and $\{ J_{l_i}, i=1, \ldots, s \}$ are distinct, and these are the only ones of length $\lambda_k$ in the two sequences, the sets $\{ x_{1}, \ldots, x_{l_s} \}$ and $\{y_1, \ldots, y_{l_s}\}$ are distinct. The induction hypothesis implies that $S_x$ and $S_y$ are distinct subspaces of equal dimensions. It follows that at least one of $y_1, y_2, \ldots, y_{l_s}$ is not in $S_x$. Let $y$ be such a cycle. Adding $y$ to sequence $x_1, \ldots, x_{l_s}$ certainly creates a larger lexicographic sequence of lengths of independent cycles. This is a contradiction.   
\item Case (2) $l_s$ equals $\dim(A)$ and $k>1$. In this case, let subspace $S_x \subset A$ be generated by $x_1, \ldots, x_{l_1-1}$ and $S_y \subset A$ be generated $y_1, \ldots, y_{l_1-1}$. If $S_x$ and $S_y$ are distinct subspaces, then at least one of $y_1, \ldots, y_{l_1-1}$, say $y$, would not be in $S_x$. Adding length of span of $y$ to the maximal sequence of independent span lengths will make it larger. It follows that $S_x = S_y$. Therefore, all the spans of length less than $\lambda_k$ in $(I_i)$ and $(J_i)$ are equal as sets. Let $T = \{ span(z) | z \in A ,  z \notin S_x \}$. By our assumption, the sets $\{ I_{l_1}, \ldots, I_{l_s} \}$ and $\{ J_{l_1}, \ldots, J_{l_s} \}$ are two distinct subsets of $T$. If they generate the same subspace, the subspace would be of dimension at most $m$, and they have to be equal, by the induction hypothesis, since they realize the same maximal span-length sequence, namely $l_s$ copies of $\lambda_k$. Hence, they generate distinct subspaces which again leads to a contradiction similar to Case (1).
\item Case (3) $l_s$ equals $\dim(A)$ and $k=1$. The spans of length $\lambda_k$ create all of $A$ and all the $\lambda_i$'s are equal. The two sequences $\big(x_k\big)$ and $\big(y_k\big)$ give two sets of basis elements for $A$, all having equal span-lengths. Let $x$ be a cycle with smallest birth time whose interval is not among $\big(J_i\big)$. Let $t$ be the time $x$ is generated. If no $y_k$ is generated at $t$, then $x$ is independent of all $y_k$ which is a contradiction. Therefore some $y_k$, say $y$, is generated at time $t$. Since $span(x)$ has the same length as $span(y)$, it follows that $span(x) = span(y)$, and we reach a contradiction with the choice of $x$. 
\end{itemize}
\end{proof}
\noindent \Cref{thm:uniqueness} allows us to define the \emph{canonical barcode of harmonic chains}  unambiguously as follows.

\begin{definition}
\label{def:chcb}
Given a filtration $F$, the set of spans realizing the lexicographical maximal sequence of independent span lengths is called the \emph{canonical barcode of harmonic chains} (equivalently, the \emph{canonical diagram of harmonic chains}), denoted as $\chd(F)$. 
\end{definition} 
Given any filtration $F$, \Cref{def:chcb} associates a unique barcode to $F$. And this barcode is, in general, distinct from a persistence barcode.
For now on, canonical harmonic chain barcode (or simply,  {\hcb}) means canonical barcode of harmonic chains. 

\subsection{Example 1}
\cref{fig:c-harmonic-example} depicts a $1$-dimensional canonical barcode of harmonic chains, from the example in~\cref{fig:first-attempt-example}. 
At time $9$, the first cycle $r_1 = 7+8-9$ is created. 
This event creates a harmonic bar that ends at $t_1$ when $r_1$ gets a coboundary. 
At time $11$, a second cycle $r_2$ is created, leading to an increase in the dimension of $Z(K_{11})$. 
To compute a cycle that survives the longest, we start from $t=+\infty$ and compute the largest $t$ such that $Z(K_{11}) \cap \har(K_t) = Z(K_{11}) \cap \ker (\delta_t) \neq \{ 0 \}$. Take the cycle $r_2 = 7+8+10-11$. $r_2$ is harmonic, and it survives up until $t_3$. 
One can check that if a cycle has non-zero coefficient for $9$, it has to die earlier; therefore, $r_2$ represents the longest bar generated at $11$, namely, $[11,t_3)$. 
At time $13$, similar to time $11$, we obtain a cycle $r_3$ which represents a bar $[13,t_2)$. 
Now, at time $15$, we have all possible edges so there has to be a harmonic chain that survives up to $t_4$. We take $r_4 = 3r_1 - \partial( t_1+t_2+t_3)$. One checks easily that $\delta_t \partial (r_j) = \delta_t(9)$, for $j=2,3,4$. Therefore, $\delta_t(r_4) = 3\delta_t(9)-3\delta_t(9)=0$ for $ 15 \leq t < t_4$. 

\begin{figure}
    \centering
    \includegraphics[scale=.85]{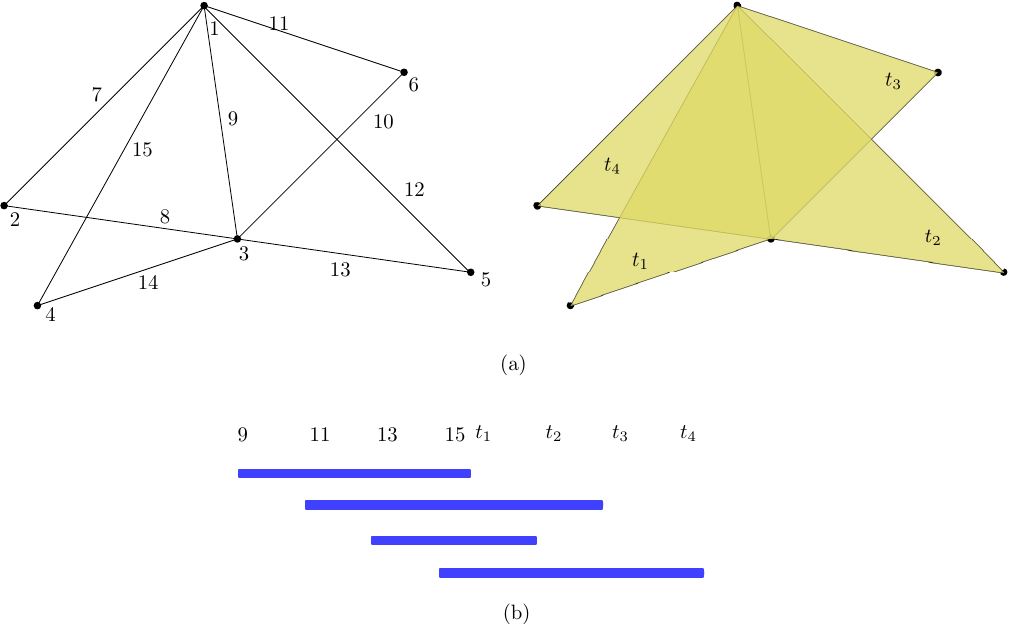}
    \caption{The canonical {\hcb} (in blue) for the filtration  in~\cref{fig:first-attempt-example}.}
    \label{fig:c-harmonic-example}
\end{figure}

\subsection{Example 2}
We provide a second example which sheds more light on the nature of the canonical {\hcb}. In this example, we have two filtrations $F_1$ and $F_2$ of the same underlying complex $K$. We are only interested in the 1-dimensional homology. In $F_1$, two independent 1-dimensional  homology classes are born early in the filtration, then cycles of low persistence are created and immediately killed. At the end, the two independent cycles created at the start are also killed. 

\begin{figure}[!ht]
    \centering
    \includegraphics[scale=0.8]{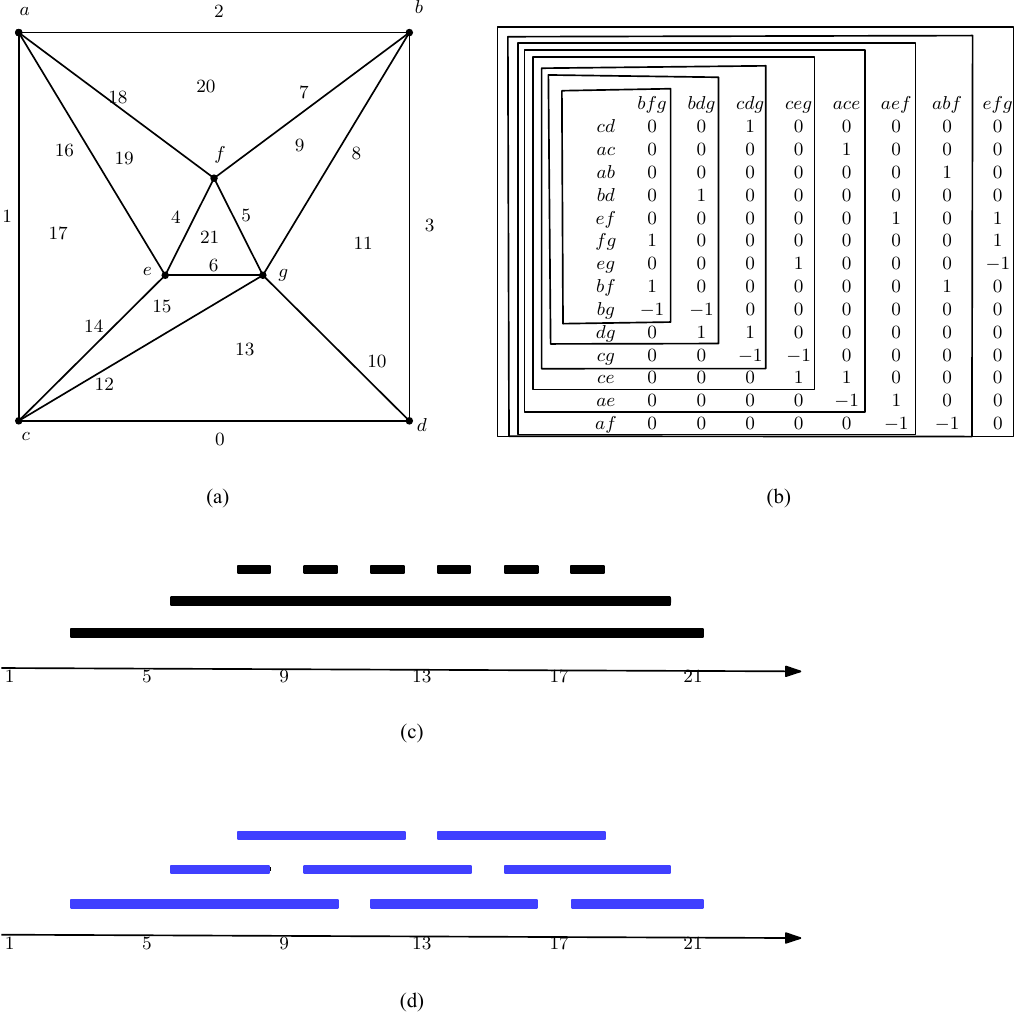}
    \caption{The ordinary persistence barcode and the canonical barcode of harmonic chains for the filtration $F_1$. (a) The insertion times of of edges and triangles.  (b) The 2-dimensional boundary matrix, the matrices of different times steps are marked. (c) The ordinary persistence barcode (in black). (d) The canonical barcode of harmonic chains (in blue).}
    \label{fig:example-2a}
\end{figure}

The first filtration $F_1$ is shown in~\cref{fig:example-2a}. 
Its ordinary persistence barcode is shown in (c), and its canonical {\hcb} is shown in (d). 
We see that, in the ordinary persistence barcode in (c), the two 1-dimensional homology classes that are created at the beginning give rise to two long bars, whereas other cycles create short bars. 
Contrary to the ordinary persistence barcode, we see that all the harmonic bars in the canonical {\hcb} (d) are rather short.

In~\cref{fig:example-2b}, we consider a second filtration $F_2$. 
In this filtration, first all the edges are inserted and then all the triangles. 
The canonical {\hcb} is justified since at the time when the last edge is inserted, already all the harmonic chains are present, and at the insertion of a triangle, the dimension of harmonic chains decreases by 1. 
Therefore, one can find harmonic chains that survive the insertion of triangles.

\subsection{Comparison: Persistence Barcode vs. Canonical Harmonic Chain Barcode}  
In the filtration $F_1$, we see that the canonical barcode of harmonic chains contains short bars whereas the ordinary persistence barcode has long bars. 
Our suggestion of taking canonical {\hcb} does not lead to salient cycle features for this example. 
This reflects the fact that the status of cycles, as being harmonic or not, in the first filtration $F_1$ does not provide us with a feature that spans a long time interval.


We observe that the canonical {\hcb} for $F_2$ reflects the existence of geometric features that survive the insertion of many triangles, whereas, the ordinary persistence barcode captures only two long features, if the shorter bars in the persistence barcode of~\cref{fig:example-2b} are counted as noise. 
Therefore, we see that the canonical {\hcb} can differentiate these two situations, whereas the ordinary persistence  barcode cannot (again, if we consider the shorter bars in the ordinary persistence barcode as noise).   

\subsection{Computing Canonical Barcode of Harmonic Chains}
\label{ssec:computation}

To compute the canonical barcode of harmonic chains, we make use of the following algorithm.
Let $F$ be a filtration and $\har(K_i)$ be the harmonic chain group of the complex $K_{t_i}$.

For $i = 0, 1,2, \ldots,m$, let $\chd_{i-1}$ be the part of the barcode computed up to time step $i$, with $\chd(-1) = \emptyset$. For each $j = i+1, i+2, \ldots$ compute $r_{ij} := dim (f^{i,j}(\har(K_{t_i})) \cap \har(K_{t_j}))$, where here we write again $f$ for the induced map on cycles. If there are already $m_i$ bars of $\chd_{i-1}$ alive at time $i$, and $h_i = dim(\har(K_{t_i})$, we add $\max\{h_i - m_i,0\}$ new bars to $\chd_{i-1}$ starting at $i$. These bars will die when $r_{ij}$ decreases as $j$ increases. If at an index $j$, $r_{ij} - r_{ij-1} >0$, we kill $r_{ij} - r_{ij-1}$ many harmonic bars which started at $i$. The bars we do not kill survive to $+\infty$. 
This concludes the description of the algorithm.

This computation can be done easily by first computing for each $i$, the Laplacian and a basis for the nullity of the Laplacian.  
Then for each $j>i$, we compute the matrix of images $\har^{i,j} := f^{i,j}(\har(K_{t_i}))$ simply by appending 0's to the end of a matrix of basis of $\har(K_{t_i})$ to make the number of rows equal to that of the matrix at time $j$. 
We then compute $r_{ij} = dim (f^{i,j}(\har(K_i)) \cap \har(K_j))$ by computing the rank of $ [ \har_j , \har^{i,j}]$ (i.e., the concatenation of two matrices) as an intermediate step. 

If $n$ is the number of simplices in the complex, and $m$ is the number of time steps, then all of these matrix operations can be done in time $O(m^2n^\omega + mn^3)$, where $n^\omega$ is the matrix multiplication time. In the algorithm, we compute the Laplacian and a basis for its kernel in with $n^3$ time in each time step, which is $O(mn^3)$ for all time steps. Then we need for each pair $i<j$, to compute the rank of $ [ \har_j , \har^{i,j}]$, which is possible in $m^2 n^\omega$ time. In total, we have $O(m^2n^\omega + mn^3)$. It is plausible that the runtime can be further improved. Providing the most efficient algorithm to compute the canonical {\hcb} is left for future work. We also prove that this algorithm is correct, see~\cref{sec:correctness} for details. 

\begin{theorem}
    Given a filtration of a complex of size $n$, with $m$ time-steps, we can compute the canonical barcode of harmonic chains in $O(m^2n^\omega + mn^3)$ time.
\end{theorem}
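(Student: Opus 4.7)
Correctness of the algorithm is established separately in~\cref{sec:correctness}, so the remaining task is to give a runtime accounting for the procedure already described in~\cref{ssec:computation}. I would separate the work into two buckets that match the two terms of the claimed bound: (i) one-time-step work, performed once per index $i \in \{0,1,\dots,m\}$, that produces a matrix for $\har(K_{t_i})$ together with ancillary objects; and (ii) pairwise work, performed once per ordered pair $i<j$, that computes the scalar $r_{ij} := \dim\bigl(f^{i,j}(\har(K_{t_i}))\cap \har(K_{t_j})\bigr)$. The bar bookkeeping (opening $\max\{h_i - m_i,0\}$ new bars at time $i$ and closing $r_{i,j-1}-r_{ij}$ many bars when $r_{ij}$ drops) adds only $O(m^2)$ overhead and is absorbed.

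\textbf{Per-time-step cost.} At index $i$ one forms the matrices $\partial_p$ and $\partial_{p+1}$ of $K_{t_i}$, which have at most $n$ rows and $n$ columns, assembles the Laplacian $L_i = \partial_p^{\,T}\partial_p + \partial_{p+1}\partial_{p+1}^{\,T}$, and extracts a basis matrix $H_i$ for $\ker L_i = \har(K_{t_i})$ by Gaussian elimination. Each of these steps is $O(n^3)$ (and the matrix products could even be done in $O(n^\omega)$, though this is not needed here). Summed over all $m$ time steps, this contributes $O(mn^3)$, giving the second term.

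\textbf{Per-pair cost.} For each pair $i<j$, the image matrix $H^{i,j}$ representing $f^{i,j}(\har(K_{t_i}))$ inside $C_p(K_{t_j})$ is obtained from $H_i$ by padding with zero rows in the coordinates corresponding to simplices introduced between $K_{t_i}$ and $K_{t_j}$; this takes $O(n^2)$ time. One then forms the concatenated matrix $[H_j \mid H^{i,j}]$, whose size is at most $n \times 2n$, and computes its rank in $O(n^\omega)$ time using fast matrix multiplication. From $\dim(U+V) = \dim U + \dim V - \dim(U\cap V)$, the scalar $r_{ij}$ is recovered by subtraction from $\dim\har(K_{t_j})$ and $\dim\har(K_{t_i})$, both of which are already known from step (i). Over all $O(m^2)$ pairs this sums to $O(m^2 n^\omega)$, giving the first term.

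\textbf{Expected obstacle.} There is no deep obstacle in the runtime analysis itself; it is a matter of confirming that every routine used is a standard linear-algebra primitive over $\Rspace$ and that the matrix sizes really are bounded by $n$ at every stage. The only subtlety I would be careful about is ensuring the padding step is done in a way compatible with the orderings of $K_{p}$ across the filtration, so that the same vector in $\Rspace^{n_p(K)}$ is used consistently when comparing $\har(K_{t_i})$ and $\har(K_{t_j})$; once this indexing is fixed once and for all, the rank computation delivers exactly $\dim(\har_j + \har^{i,j})$ and the bound $O(m^2 n^\omega + m n^3)$ follows by summation.
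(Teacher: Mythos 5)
Your proposal matches the paper's own argument essentially step for step: per time step, build the Laplacian and a kernel basis in $O(n^3)$ (total $O(mn^3)$), and per pair $i<j$, pad the basis of $\har(K_{t_i})$ with zeros, concatenate with the basis at time $j$, and compute a rank in $O(n^\omega)$ (total $O(m^2 n^\omega)$), with correctness deferred to~\cref{sec:correctness}. Your explicit use of $\dim(U+V)=\dim U+\dim V-\dim(U\cap V)$ and the remark on consistent indexing are just spelled-out versions of details the paper leaves implicit, so the two arguments are the same.
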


\begin{figure*}[!ht]
    \centering
    \includegraphics[scale=0.8]{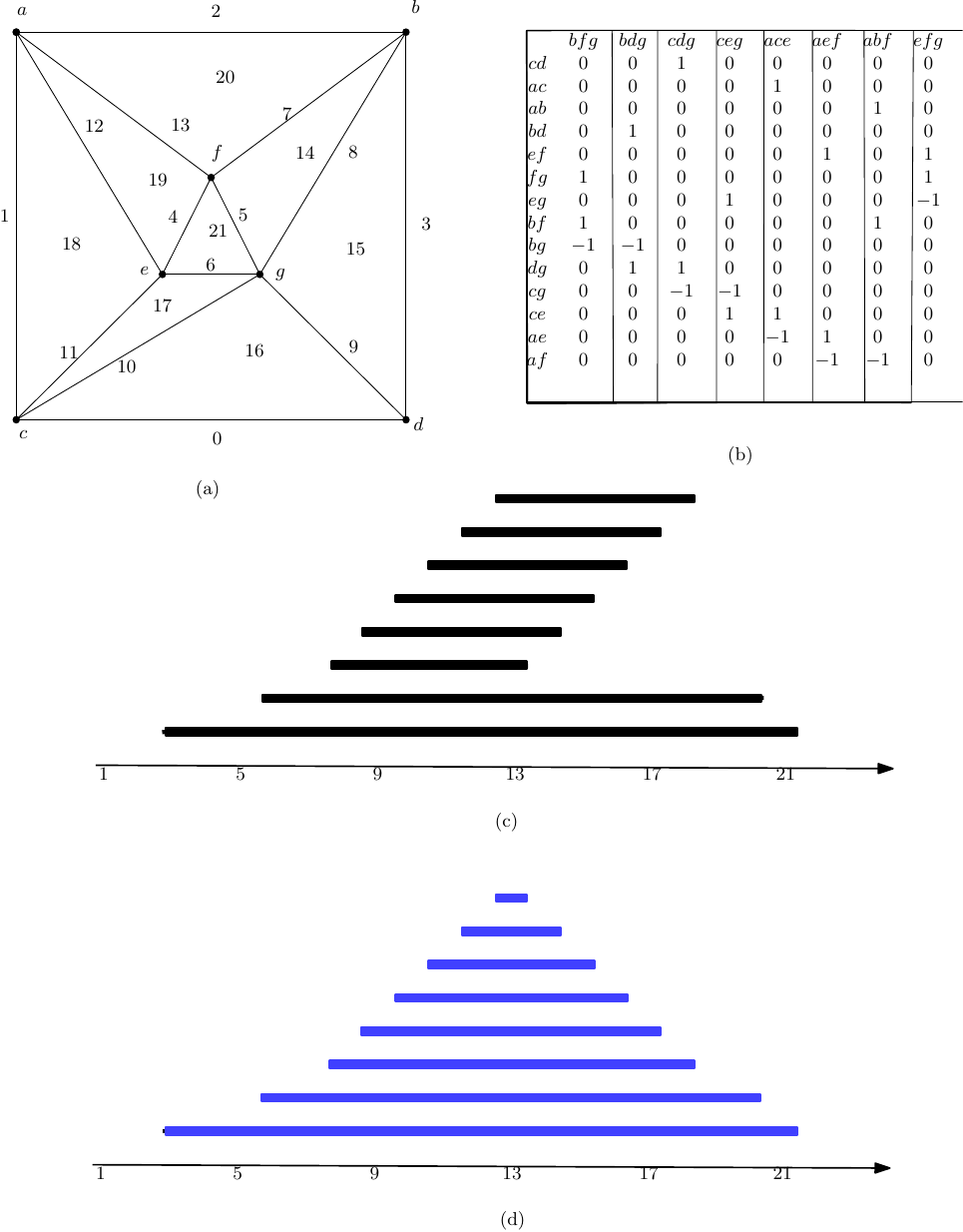}
    \caption{The ordinary persistence barcode and the canonical barcode of harmonic chains. (a) The insertion times of edges and triangles for the filtration $F_2$. (b) The 2-dimensional boundary matrix, the matrices of different times steps are marked. (c) The ordinary persistence barcode (in black). (d) The canonical barcode of harmonic chains (in blue). The persistence reduction of the boundary matrix remains the same since the matrix is unchanged. The basis for ordinary persistence barcode therefore remains the same chains as our first ordering, however, the insertion times of simplices are changed, resulting in different bars.}
    \label{fig:example-2b}
\end{figure*}

\subsection{Stability}
We express the stability of the canonical barcode by considering the typical setting in which the two filtrations $F$ and $G$ are sub-level set filtrations for two simplex-wise linear functions $\hat{f}$ and $\hat{g}$, respectively, defined on the same complex $K$. This is the same setting as the seminal stability result of~\cite{cohen2005stability}. However, we prove, as an intermediate step, a more general result. The proofs are inspired by the overall approach of \cite{cohen2005stability}. Similar to~\cite{cohen2005stability}, our stability result can be extended to functions on topological spaces, however, we limit our scope of discussion to simplicial complexes.

We start by defining the notion of harmonic interleaving distance. Then we show that the bottleneck distance between canonical {\hcbs} is upper bounded by the harmonic interleaving distance between the two filtrations; see~\cref{fig:h-interleaving}. Eventually we show that bottleneck distance is upper bounded by $||\hat{f} - \hat{g} ||_\infty$.

\begin{definition}
    Let $\{\phi_\alpha\}$,$\{ \psi_\alpha\}$, $\phi_\alpha: F_\alpha \xrightarrow{} G_{\alpha + \eps}$, $\psi_\alpha: G_\alpha \xrightarrow{} F_{\alpha + \eps}$ be a family of chain maps defining an $\eps$-chain interleaving between $F$ and $G$. We say that the pair of families $\{\phi_\alpha\}$, $\{\psi_\alpha\}$ are \textit{harmonic-preserving} if
    \begin{itemize}[noitemsep,leftmargin=*]
        \item For any harmonic chain $c\in F$, for all $\alpha$ and $\rho \geq \eps$, if $\delta(f^{\alpha, \alpha+ \rho}(c))=0$ then $\delta \phi_\alpha (c)=0$.
        \item For any harmonic chain $c'\in G$, for all $\alpha$ and $\rho \geq \eps$, if $\delta(g^{\alpha,\alpha+\rho}(c'))=0$ then $\delta \psi_\alpha (c')=0$.
    \end{itemize} 

    The harmonic interleaving distance $d_{HI}$ is defined analogously to the chain interleaving distance, however, we allow only harmonic-preserving maps.
\end{definition}

\begin{figure}
    \centering
    \includegraphics{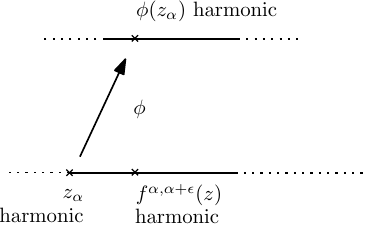}
    \caption{Harmonic interleaving at the chain level. If $z_\alpha$ is still harmonic at time $\alpha+\rho$, then $\phi(z)$ is harmonic.}
    \label{fig:h-interleaving}
\end{figure}

The following lemma is analogous to the Box Lemma in~\cite{cohen2005stability}.

\begin{lemma}
\label{lemma:box-lemma}
Let $F$ and $G$ be filtrations of complexes $K$ and $K'$, respectively, defined using inclusion maps, and let $\eps > 0$ be such that there are $\{ \phi_t \}$ and $\{ \psi_t \}$ realizing a harmonic-preserving $\eps$-interleaving (at the chain level). Let $z \in Z(K)$ be a cycle with $span(z) = [\alpha, \beta)$ such that $\alpha - \beta > \eps$. Then, writing $[\alpha', \beta')$ for $span((\phi(z)))$ we have $\alpha-\eps \leq \alpha' \leq \alpha+\eps$, and $ \beta -\eps \leq \beta' \leq \beta+\eps.$ 
\end{lemma}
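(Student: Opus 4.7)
Let $z' := \phi_\alpha(z) \in C(K'_{\alpha+\eps})$. The plan is to establish each of the four inequalities by first transporting $z$ across the interleaving via $\phi_\alpha$, then using the triangle identities $\psi_{t+\eps}\circ\phi_t = f^{t,t+2\eps}$ and $\phi_{t+\eps}\circ\psi_t = g^{t,t+2\eps}$ from \Cref{def:chain-interleaving}, together with the filtration-commutativity relations, to translate conclusions between $z$ and $z'$. The harmonic-preserving hypothesis is used in the forward direction to transport harmonicity from $z$ to $z'$ (or from $z'$ to $z$ via $\psi$); in the reverse direction I use its contrapositive.

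For the upper bound $\alpha' \leq \alpha + \eps$: since $\beta - \alpha > \eps$, the cycle $z$ is still a cocycle at time $\alpha + \eps$ in $F$, so applying the harmonic-preserving property of $\phi_\alpha$ with $\rho = \eps$ yields $\delta_{\alpha+\eps}(z') = 0$; combined with $\partial z' = \phi_\alpha(\partial z) = 0$, the chain $z'$ is harmonic at $\alpha + \eps$, forcing $\alpha' \leq \alpha + \eps$ and (as a bonus used below) $\beta' > \alpha + \eps$. For the lower bound $\alpha' \geq \alpha - \eps$: assume for contradiction $\alpha' < \alpha - \eps$. Then $z'$ is already a chain of $G_{\alpha'}$, and filtration-commutativity of $\psi$ together with the triangle identity $\psi_{\alpha+\eps}(\phi_\alpha(z)) = z$ gives $\psi_{\alpha'}(z') = z$ as chains of $F_{\alpha'+\eps}$. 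But then $z$ is a chain in $K_{\alpha'+\eps}$ with $\alpha' + \eps < \alpha$, contradicting that $\alpha$ is the first time every simplex in the support of $z$ has appeared in $F$.

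For the upper bound $\beta' \leq \beta + \eps$: assuming $\beta' > \beta + \eps$, the chain $z'$ is harmonic at $\beta + \eps$, so the harmonic-preserving property of $\psi$ applied at time $\beta$ with $\rho = \eps$ gives $\delta_{\beta+\eps}(\psi_\beta(z')) = 0$ in $F$. Since $\beta - \eps \geq \alpha$, filtration-commutativity yields $z' = g^{\alpha+\eps,\beta}(\phi_\alpha(z)) = \phi_{\beta-\eps}(z)$ and the triangle identity $\psi_\beta \phi_{\beta-\eps} = f^{\beta-\eps,\beta+\eps}$ gives $\psi_\beta(z') = z$; hence $z$ would still be a cocycle at $\beta + \eps$, contradicting the principle (recorded in the paragraph preceding \Cref{def:span}) that once a cycle fails to be a cocycle it never becomes one again. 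For the lower bound $\beta' \geq \beta - \eps$: assume $\beta' < \beta - \eps$. The earlier inequality $\beta' > \alpha + \eps$ ensures $\beta' - \eps \geq \alpha$, so the analogous triangle-identity computation gives $\psi_{\beta'}(z') = z$. The death of $z'$ at $\beta'$ has two possible causes. If $z'$ becomes a boundary in $G_{\beta'}$, applying $\psi_{\beta'}$ to a witness chain shows that $z$ is a boundary in $F_{\beta'+\eps}$; since $\beta'+\eps < \beta$, this contradicts that $z$ represents a non-trivial class throughout $[\alpha,\beta)$. If instead $\delta_{\beta'}(z') \neq 0$, then setting $\alpha^* := \beta' - \eps$, filtration-commutativity gives $\phi_{\alpha^*}(z) = z'$, and the contrapositive of the harmonic-preserving property of $\phi_{\alpha^*}$ (with $\rho = \eps$) forces $z$ to fail to be a cocycle at $\alpha^* + \eps = \beta' < \beta$, contradicting harmonicity of $z$ on $[\alpha,\beta)$.

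The main obstacle I anticipate is keeping track of time-shifts in the narrow regime where $\beta - \alpha$ exceeds $\eps$ only slightly: in this regime one cannot freely translate the anchor point of $\phi$ or $\psi$ by $\eps$ in both directions, so the auxiliary inequality $\beta' > \alpha + \eps$, which falls out of the harmonic-preserving upper bound on $\alpha'$, is what legitimizes the triangle-identity rewriting used in the lower bound on $\beta'$. Once this ordering is in place, each step reduces to the routine verification that $\psi_t(z') = z$ or $\phi_{t-\eps}(z) = z'$ as chains via the interleaving axioms.
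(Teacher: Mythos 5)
Your proposal is correct and follows essentially the same route as the paper's proof: transport $z$ to $z'=\phi_\alpha(z)$, use the harmonic-preserving property (and its contrapositive) together with the interleaving commutativity and triangle identities over inclusion maps to pin down all four endpoint inequalities by contradiction. The only differences are cosmetic—you anchor the harmonic-preserving applications at $\beta$ and $\beta'-\eps$ instead of the paper's $\eta$-perturbed times, and you make explicit the boundary-versus-coboundary case split for the death of $z'$ that the paper leaves implicit.
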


\begin{proof}

In the following argument by a chain $c_t$ we mean the chain $c$ of $K$ considered as a chain of $K_t$. \cref{fig:boxlemma} helps the argument.
Consider the chain $z'=\phi_\alpha(z)$. $z'$ is clearly a cycle and since the maps are harmonic preserving, and $f^{\alpha,\alpha+2\eps}(z) \in Z_{\alpha+2\eps}(K)$ is still a cocycle, by the harmonic preserving property $z'$ is a cocycle and hence harmonic. We now consider the span of $z'$ in $G$, let it be denoted by $[\alpha', \beta')$. 

We have seen already that $\alpha'-\alpha \leq \eps$. Assume, for the sake of contradiction, that $\alpha - \alpha '>\eps$. Then $z'$ exists at time $\alpha-\eps-\eta$ for small $\eta$. We consider $y = \psi_{\alpha-\eps-\eta} (z'_{\alpha-\eps-\eta}) \in F_{\alpha-\eta}$. We suppress the $x$ in $f^{x,y}$ in the coming calculations, it can be deduced from the argument. We also suppress the subscripts of $\phi_t, \psi_t$. 

We have 
\[  f^{2\eps+\eta} (y) = f^\eta \psi \phi (y) = \psi g^\eta \phi(y) = \psi g^\eta \phi \psi (z'_{\alpha-\eps-\eta}) = \psi g^{2\eps+\eta} (z'_{\alpha-\eps-\eta}) = \psi(z'_{\alpha+\eps}) = z_{\alpha+2\eps}. \] 
Since $f$ is the inclusion, it follows that $y = z_{\alpha-\eta}$ which is a contradiction since $z$ does not exist at $\alpha-\eta$. It follows that $ \alpha-\alpha' \leq \eps$.

If $\beta'-\beta>\eps$, then $|span(z')|>\eps$ by the argument on starting points of $span(z')$ above. We can deduce that $\psi(z'_{\beta'-\eps}) = z_{\beta'}$ is harmonic by the harmonic preserving property of $\psi$. This is a contradiction since the interval of $z = \psi(z') = \psi\phi(z)$ ends at $\beta < \beta'-\eps$. Therefore, $\beta'-\beta \leq \eps$.

Assume $\beta-\beta'>\eps$. Then for small $\eta>0$, $\beta-\eps-\eta > \beta'$. It follows that $\phi(z_{\beta-\eps-\eta}) = z'_{\beta-\eta}$ is harmonic which is a contradiction, since $z'_{\beta'}$ is not harmonic and the maps are inclusions.
\end{proof}

\begin{figure}
    \centering
    \includegraphics[scale=0.8]{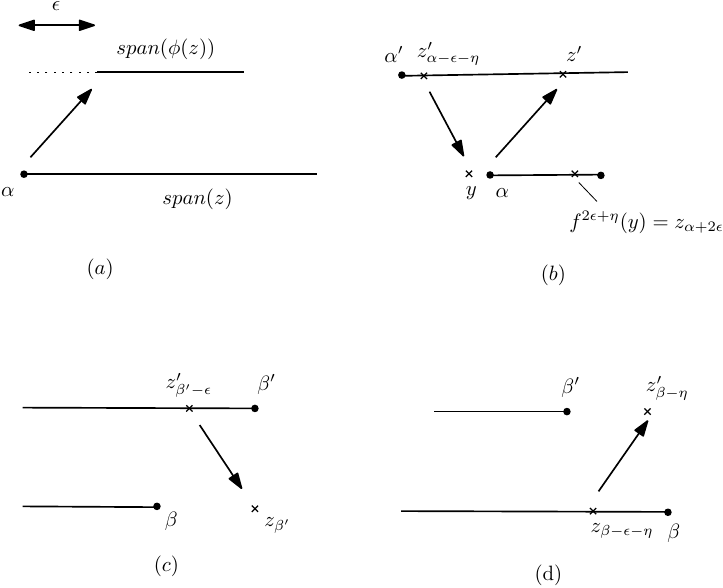}
    \caption{A schematic for the proof of~\Cref{lemma:box-lemma}.}
    \label{fig:boxlemma}
\end{figure}

For a filtration $F$, let $\delta(F) = min\{ t_{i+1} - t_{i}, i = 1, \ldots, m-1 \}$. We say that $F$ is \textit{very close} to $G$ if $d_{HI}(F,G) < \min \{ \delta(F)/4, \delta(G)/4 \} $.
Observe that $\delta(F)$ is the minimum possible distance between any two endpoints of intervals or interval lengths.

\begin{lemma}
\label{lemma:simple-stability}
If $F$ is very close to $G$ then
$$ d_B(\chd (F), \chd (G)) \leq d_{HI}(F,G).$$
\end{lemma}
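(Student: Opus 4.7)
The plan is to mirror the Box-Lemma-based proof of classical persistence stability, adapted to the harmonic setting. Fix $\eps$ with $d_{HI}(F,G) < \eps < \min\{\delta(F),\delta(G)\}/4$, available because $F$ is very close to $G$, and fix a harmonic-preserving $\eps$-chain-interleaving $(\{\phi_\alpha\},\{\psi_\alpha\})$. It suffices to produce a bijection $\gamma \colon \chd(F) \to \chd(G)$ with bottleneck cost at most $\eps$, since letting $\eps \searrow d_{HI}(F,G)$ then gives the desired bound.

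The first step lifts \Cref{lemma:box-lemma} from a cycle-level statement to a dimension inequality between persistent harmonic subspaces. For reals $s < t$ with $t - s > 2\eps$, set $V_F(s,t) := \har(K_s) \cap \har(K_t)$. Because a cycle that fails to be a cocycle can never become one again, each element of $V_F(s,t)$ is harmonic throughout $[s,t]$, so $r_F(s,t) := \dim V_F(s,t)$ equals the number of bars of $\chd(F)$ containing $[s,t]$ (the invariant underlying the algorithm of~\cref{ssec:computation}). For any $h \in V_F(s,t)$, the span of $h$ contains $[s,t]$ and has length at least $t - s > 2\eps > \eps$, so the Box Lemma forces the span of $\phi_s(h)$ to contain $[s+\eps, t-\eps]$; that is, $\phi_s$ sends $V_F(s,t)$ into $V_G(s+\eps, t-\eps)$. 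The composition $\psi_{s+\eps}\phi_s = f^{s,s+2\eps}$ is the inclusion for inclusion filtrations and hence injective, so $\phi_s$ itself is injective on $V_F(s,t)$. Together with the symmetric estimate obtained from $\psi$, this yields the sandwich
\[ r_F(s,t) \;\leq\; r_G(s+\eps, t-\eps) \;\leq\; r_F(s+2\eps, t-2\eps), \qquad t - s > 4\eps. \]

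The second step converts the sandwich into an actual bijection. The very-close hypothesis is critical here: any shift of size at most $2\eps$ is strictly less than half of the minimum time-step gap in either filtration, so $r_F$ and $r_G$ are locally constant on $2\eps$-neighborhoods of grid points in their respective birth-death planes. A standard inclusion-exclusion on $r_F$ expresses the multiplicity of each bar $[a,b) \in \chd(F)$ as a four-term alternating sum of $r_F$-values just before $a$ and $b$; plugging in the sandwich, this multiplicity equals the total multiplicity of bars of $\chd(G)$ whose endpoints lie in the box $[a-\eps,a+\eps] \times [b-\eps,b+\eps]$. Form the bipartite graph joining $[a,b) \in \chd(F)$ to $[c,d) \in \chd(G)$ whenever $\max(|a-c|, |b-d|) \leq \eps$. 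Hall's condition is verified by the box-multiplicity equality, so a perfect matching $\gamma$ exists; by construction its bottleneck cost is at most $\eps$.

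The main obstacle is the bookkeeping in the second step: while the dimension sandwich follows cleanly from the Box Lemma and the harmonic-preserving hypothesis, turning it into an exact matching demands careful tracking of multiplicities in each $\eps$-box and verification of Hall's condition. Without the very-close assumption, an $\eps$-shift could reach several grid points of the other filtration and make the matching ambiguous; the hypothesis $4\eps < \min\{\delta(F),\delta(G)\}$ is precisely what removes this ambiguity, allowing the local box matchings to be assembled into a global one.
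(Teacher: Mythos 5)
Your route is genuinely different from the paper's: the paper proves the lemma by a direct induction on the distinct span lengths $d_k$ and $l_k$, using the lexicographic maximality of $\chd$ to show that $\phi$ carries the subspaces $\Lambda_k$ spanned by the canonical representatives of $F$ onto the corresponding subspaces $\Lambda'_k$ of $G$, and matching bars length-by-length; you instead try to reduce everything to a rank function $r_F(s,t)=\dim\bigl(\har(K_s)\cap\har(K_t)\bigr)$, sandwich it via \Cref{lemma:box-lemma}, and then run the classical Cohen-Steiner--Edelsbrunner--Harer box-counting/Hall argument. If it could be made to work, your approach would be more modular, since all the harmonic-specific content would be confined to the sandwich inequality.

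The gap is the very first identity you assert: that $r_F(s,t)$ ``equals the number of bars of $\chd(F)$ containing $[s,t]$,'' justified only by an appeal to ``the invariant underlying the algorithm of \cref{ssec:computation}.'' This is not a definition and is nowhere established in the paper: $\chd(F)$ is defined in \Cref{def:chcb} as the lexicographically maximal set of independent spans, not via a rank invariant, and the appendix (\Cref{lemma:greedy}, \cref{sec:correctness}) proves a different statement, namely that the lex-maximal barcode can be built greedily in time; it never states, let alone proves, that the resulting barcode realizes $\dim\bigl(\har(K_s)\cap\har(K_t)\bigr)$ for every pair $s\le t$. Only one direction is immediate: independence of the canonical representatives gives that the number of canonical bars containing $[s,t]$ is at most $r_F(s,t)$. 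The reverse direction -- that the lex-maximal selection attains these dimensions simultaneously for all $(s,t)$ -- is a substantive claim; proving it requires, e.g., constructing a basis of $Z(K)$ adapted simultaneously to the increasing filtration $Z(K_s)$ and the decreasing filtration $\ker\delta_t$, showing its spans realize all ranks, and an exchange argument showing the lex-maximal sequence coincides with it (then invoking \Cref{thm:uniqueness}). That argument is of comparable weight to the paper's entire induction, so without it your proof has outsourced its crux to an unproven statement, and the box-counting and Hall-matching machinery in your second step has nothing to stand on. A smaller omission in the same step: you must also handle bars of length at most $2\eps$ by matching them to the diagonal, which you mention only implicitly when restricting the sandwich to $t-s>4\eps$.
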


\begin{proof}
Let $I_1, I_2, \ldots, I_{p_F}$ be the sequence of the intervals of $F$ in the canonical {\hcb}, and $J_1, J_2, \ldots, J_{p_G}$ be those of $G$. Moreover, let $d_1, d_2, \ldots, d_{q_F}$ be distinct lengths of the intervals $I_i$, and $l_1,l_2, \ldots, l_{q_F}$ be the lengths for $G$, in order. Let $\{ \phi_t \}$, $\{ \psi_t \}$ define a harmonic-preserving $\eps$-interleaving between $F$ and $G$, where $\eps$ is arbitrarily close to $d_{HI}(F,G)$.

Let the subspace $\Lambda_j \subset Z(K)$ be generated by cycles $z_i$ associated to $I_i$ with $|I_i| \leq  d_j$. Similarly, let $\Lambda'_j \subset Z(K')$ be the subspace generated by cycles $y_i$ associated to $J_i$ with $|J_i| \leq l_j$. We claim that for all $j$, if $d_j > 2\eps$ or $l_j > 2\eps$ then $\phi(\Lambda_j) = \Lambda'_j$  and we can match intervals of length at least $d_j$ to those of at least $l_j$.  

In the following for any cycle $z \in Z(K)$, by $\phi(z)$ we mean $\phi_\alpha(z)$ where $\alpha$ is the time that $z$ is born. We extend this convention to subspaces of cycles in the natural way. 
We first prove the claim for $j=1$. Without loss of generality, assume $ d_1 \geq l_1 $. Consider a generator $z$ of $\Lambda_1$ associated to an interval with length $d_1$. If $\phi(z) \notin \Lambda'_1$, then, \Cref{lemma:box-lemma} implies that we can add one more independent span length of length at least $ d_1 - 2\eps  \geq l_1- 2\eps \geq l_1 -\delta(G)/2 > l_2$ to the lex-maximal independent span length sequence of $G$. This is a contradiction. It follows that $\phi(\Lambda_1) \subset \Lambda'_1$. To show that $\Lambda'_1 \subset \phi (\Lambda_1)$, it is enough to argue that the generators of $\Lambda'_1$ are in the image. Since $\phi(\Lambda_1) \subset \Lambda'_1$ we have $d_1 \geq l_1 - 2\eps$. If one such generator $y$ is not in the image $\phi(\Lambda_1)$, then $|span(\psi(y))| \geq l_1 - 4\eps > d_2$, hence $\psi(y) \in \Lambda_1$. Thus $y = \phi \psi (y) \in \Lambda'_1$. This contradicts our assumption on $y$. It follows that $\lambda'_1 \subset \phi(\Lambda_1)$ and thus $\Lambda'_1 = \phi(\Lambda_1)$. 

We now show that each interval $I_i$ in the barcode of $F$ with $|I_i| = d_1$, has to be mapped by $\phi$ to an interval with length $l_1$ in the barcode of $G$. Let an interval $I$ be generated by $z$. Since $\phi(z) \in \Lambda'_1$, $\phi(z)$ can be written as a combination of generators $y_1, \ldots, y_{n_1}$ of intervals of the barcode of $G$. $\phi(z)$ has to be born at the time when one of the $y_i$ is born. It follows that $span(\phi(z))$ agrees with interval of $y_i$, since the start and end of the next nearest interval differs by at least $\delta(G)> \eps$ from start and end of interval of $\phi(z)$.

Assume we have matched the intervals of length less than $d_k$ to those of length less than $l_k$ and such that the endpoints of matched intervals are within $\eps$. We consider the intervals of length $d_k$ and $l_k$. Without loss of generality assume $d_k > l_k$. Let $z$ be a generator of length $d_k$. Since $\phi$ is injective $\phi(z)$ is independent of $\Lambda'_{k-1}$. If $d_k > l_k + 2\eps$ then the lex-maximal sequence of span lengths of $G$ can be improved by adding $\phi(z)$. Therefore, $\phi(\Lambda_k) \subset \Lambda'_k$ and $d_k \leq l_k+2\eps$ and $l_k \geq d_k -2\eps$. 
If there is a $y \notin \phi(\Lambda'_k)$ which is a generator of an interval $J_i$ of length $l_k$, then $span(\psi(y))$ is of length at least $l_k -2\eps \geq d_k-4\eps = d_k -\delta(F)$. The next span length is at most $d_k-\delta$. Therefore, $|span(\psi(z))|=d_k$ and $\psi(y)$ needs to be in $\Lambda_k$, and we reach a contradiction as before.  It follows that $\phi(\Lambda_k) = \Lambda'_k$.   

We now show that each interval $I_i$ in the barcode of $F$ with $|I_i| = d_k$, has to be mapped by $\phi$ to an interval with length $l_k$ in the barcode of $G$. Let intervals $I_1, \ldots, I_m$ of length $d_k$ be generated by $z_1, \ldots, z_m$. For each $i$, since $\phi(z_i) \in \Lambda'_k$, $\phi(z_i)$ can be written as a combination of generators $y_1, \ldots, y_{n_k}$ of intervals of the barcode of $G$ of length up to $d_k$. $\phi(z_i)$ has to be born at the starting point of some $y_{j_i}$ in the combination, say at time $t_i$. By induction hypothesis, all intervals of length longer than $l_k$ born at $t_i$ are already matched, so at time $t$ an interval of length at most $l_k$ is generated for each $i=1,\ldots,m$, otherwise, some of $\phi(z_i)$ would be in $\Lambda'_{k-1}$. Since $l_{k+1}-l_k> \delta$, it follows that an interval of length $l_k$ is generated at time $t_i$, and we can match $span(z_i)$ with some interval of length $l_k$ in the barcode of $G$ with a cost of $\eps$.

The above argument can be repeated as long as both $d_k$ and $l_k$ are present, and $\min{d_k,l_k} > \eps$ so that we can use the harmonic-preserving property of the interleaving maps. By~\Cref{lemma:box-lemma}, if $\max\{d_k, l_k\}> 2\eps$, then $\min\{l_k,d_k\}>\eps$. So the argument can be repeated as long as $\max\{d_k, l_k\} > 2\eps$. If $\max\{d_k, l_k\} \leq 2\eps$ we match the interval to the nearest point of the diagonal. This is possible again with a cost of $\eps$.
\end{proof}
The following is analogous to the Interpolation Lemma from~\cite{cohen2005stability}.

\begin{theorem}
Let $\hat{f}$ and $\hat{g}$ be two real-valued simplex-wise linear functions defined on a simplicial complex $K$. Let $F$ and $G$ be the sub-level-set filtrations of $\hat{f}$ and $\hat{g}$. Then $$ d_B (\chd(F),\chd(G)) \leq || \hat{f} -\hat{g}||_\infty.$$
\end{theorem}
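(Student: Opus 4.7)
The plan is to adapt the Interpolation Lemma approach of \cite{cohen2005stability} to the harmonic setting. Define the straight-line family of simplex-wise linear functions $\hat{h}_\lambda := (1-\lambda)\hat{f} + \lambda\hat{g}$ for $\lambda \in [0,1]$, and let $H_\lambda$ denote its sub-level-set filtration, so $H_0 = F$ and $H_1 = G$. I will establish that, for any $\lambda < \lambda'$ in $[0,1]$, the natural subcomplex inclusions between $H_\lambda$ and $H_{\lambda'}$ realize a harmonic-preserving chain $\eps$-interleaving with $\eps = (\lambda'-\lambda)\|\hat{f}-\hat{g}\|_\infty$, then telescope the corresponding applications of \Cref{lemma:simple-stability} via the triangle inequality for the bottleneck distance.

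For the interleaving itself, since $|\hat{h}_\lambda(\sigma) - \hat{h}_{\lambda'}(\sigma)| \leq \eps$ for every simplex $\sigma$, we have $H_\lambda(\alpha) \subseteq H_{\lambda'}(\alpha+\eps)$ and symmetrically $H_{\lambda'}(\alpha) \subseteq H_\lambda(\alpha+\eps)$. The induced chain-level inclusions $\phi_\alpha, \psi_\alpha$ satisfy the commutativity requirements of \Cref{def:chain-interleaving} automatically because every map involved is an inclusion. To verify the harmonic-preserving property, I would argue that if $c \in C(H_\lambda(\alpha))$ is a cycle that remains harmonic in $H_\lambda$ at time $\alpha + \rho$ for the value of $\rho$ actually invoked by the proof of \Cref{lemma:box-lemma} (namely $\rho = 2\eps$), then any $(p+1)$-simplex $\tau \in H_{\lambda'}(\alpha + \eps)$ satisfies $\hat{h}_\lambda(\tau) \leq \hat{h}_{\lambda'}(\tau) + \eps \leq \alpha + 2\eps$, so $\tau \in H_\lambda(\alpha + 2\eps)$. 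Because $c$ is a cocycle there, $\langle \partial\tau, c\rangle = 0$, and consequently $\delta(\phi_\alpha(c)) = 0$ in $H_{\lambda'}(\alpha+\eps)$. The symmetric statement for $\psi_\alpha$ is identical. This gives $d_{HI}(H_\lambda, H_{\lambda'}) \leq (\lambda'-\lambda)\|\hat{f}-\hat{g}\|_\infty$.

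Next, I would pick a partition $0 = \lambda_0 < \lambda_1 < \cdots < \lambda_N = 1$ that avoids the finitely many \emph{critical} parameters $\lambda_c \in [0,1]$ at which two simplices of $K$ acquire equal $\hat{h}_{\lambda_c}$-value, and that is fine enough so each consecutive pair lies inside a single non-critical subinterval and satisfies $(\lambda_{i+1}-\lambda_i)\|\hat{f}-\hat{g}\|_\infty < \tfrac{1}{4}\min\{\delta(H_{\lambda_i}), \delta(H_{\lambda_{i+1}})\}$. Such a refinement exists because on each non-critical subinterval the combinatorial order of simplex insertions is fixed, so $\delta(H_\lambda)$ is uniformly bounded below by a positive constant. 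Applying \Cref{lemma:simple-stability} to each consecutive pair and summing gives
\[
d_B(\chd(F), \chd(G)) \;\leq\; \sum_{i=0}^{N-1} (\lambda_{i+1}-\lambda_i)\|\hat{f}-\hat{g}\|_\infty \;=\; \|\hat{f}-\hat{g}\|_\infty.
\]

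The main obstacle is crossing a critical parameter $\lambda_c$, where $\delta(H_{\lambda_c}) \to 0$ forces the ``very close'' hypothesis of \Cref{lemma:simple-stability} to fail in every neighborhood of $\lambda_c$, so the partition above can only approach but not straddle such a point. I would resolve this by a symbolic perturbation: replace $\hat{h}_\lambda$ by $\hat{h}_\lambda^{\eta}(\sigma) := \hat{h}_\lambda(\sigma) + \eta\cdot\mathrm{ord}(\sigma)$ for a fixed total order $\mathrm{ord}$ on the simplices of $K$ and a generic small $\eta > 0$ chosen to eliminate all critical parameters inside $(0,1)$; run the telescoping argument above on the perturbed family (whose endpoints differ from $\hat{f},\hat{g}$ by $O(\eta)$); and pass to the limit $\eta \to 0$, using that the canonical {\hcb} depends continuously on simplex insertion times within a fixed combinatorial filtration pattern, so the bottleneck distance is stable under this limit. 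Alternatively, one can verify directly that as $\lambda \to \lambda_c^{\pm}$ the two one-sided limit canonical {\hcbs} agree: each harmonic bar is birthed and killed by specific simplices whose insertion times converge to the common value at $\lambda_c$, so the bottleneck contribution of each critical crossing is absorbed into the Lipschitz estimate on the adjacent non-critical pieces.
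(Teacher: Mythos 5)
Your overall strategy is the same as the paper's: interpolate via $h_\lambda=(1-\lambda)\hat f+\lambda\hat g$, observe that the sublevel-set inclusions give harmonic-preserving chain $\eps$-interleavings with $\eps=c(\lambda'-\lambda)$, apply \Cref{lemma:simple-stability} to consecutive parameters, and telescope. The genuine gap is in your handling of the subdivision of $[0,1]$ near critical parameters, and it occurs in all three of your devices. First, the claim that $\delta(H_\lambda)$ is uniformly bounded below on a non-critical subinterval is false: if two simplices $\sigma,\tau$ cross at $\lambda_c$, then $\delta(H_\lambda)\leq |h_\lambda(\sigma)-h_\lambda(\tau)|\to 0$ as $\lambda\to\lambda_c$ even though the insertion order is constant on the open subinterval; since the admissible step length in the ``very close'' hypothesis is at most $\delta/(4c)$, which shrinks proportionally to the distance from $\lambda_c$, no finite partition satisfying that hypothesis can even reach the end of a non-critical piece, let alone straddle it. Second, the symbolic perturbation does not eliminate critical parameters: for a pair $\sigma,\tau$ the map $\lambda\mapsto h_\lambda(\sigma)-h_\lambda(\tau)$ is affine with slope $(\hat g-\hat f)(\sigma)-(\hat g-\hat f)(\tau)$, and adding $\eta(\mathrm{ord}(\sigma)-\mathrm{ord}(\tau))$ merely translates its unique zero; whenever $\hat f$ and $\hat g$ order two simplices differently (the only interesting case) a crossing persists in $(0,1)$ for every small $\eta$, so the perturbed family still has critical parameters. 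Third, the fallback --- that the one-sided limits of the canonical barcodes agree at $\lambda_c$, i.e.\ that the canonical barcode depends continuously on insertion times through a degeneration --- is exactly the nontrivial content at these parameters and is asserted rather than proved; the lex-maximal selection of spans can change combinatorially as span lengths cross, so this continuity is not automatic and is essentially another instance of the stability one is trying to establish.

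The paper's proof resolves precisely this point by following the Interpolation Lemma of \cite{cohen2005stability} more closely: it covers $[0,1]$ by open intervals $J_\lambda$ centered at \emph{every} $\lambda$ (including degenerate ones, where $\delta(h_\lambda)$ is still strictly positive because $\delta$ only measures gaps between distinct filtration values), with radius proportional to $\delta(h_\lambda)$, extracts a finite subcover with consecutive overlaps, and derives $\|h_{\lambda_i}-h_{\lambda_{i+1}}\|_\infty\leq \min\{\delta(\lambda_i),\delta(\lambda_{i+1})\}/4$ so that \Cref{lemma:simple-stability} applies to every consecutive pair before telescoping; in particular a degenerate parameter can itself serve as an interpolation point, which is the idea missing from your plan. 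A secondary remark: your verification of the harmonic-preserving property only covers the case where the cycle is known to remain harmonic $2\eps$ into the future, whereas the definition (and some steps in the proof of \Cref{lemma:box-lemma}, e.g.\ the argument at $\beta'-\eps$) invoke it for $\rho$ arbitrarily close to $\eps$; this should be reconciled, although here you are no less careful than the paper, which asserts without proof that the inclusions are harmonic-preserving.
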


\begin{proof}
The proof is based on the argument of~\cite{cohen2005stability} with some modifications. Let $c = ||\hat{f} -\hat{g}||_\infty$. We define a 1-parameter family of convex combinations $h_\lambda = (1-\lambda)\hat{f} + \lambda\hat{g}$, $\lambda \in [0,1]$. These functions interpolate $\hat{f}$ and $\hat{g}$. Since $K$ is finite, the images $\hat{f}(K)$ and $\hat{g}(K)$ are bounded. Without loss of generality, we assume that values are at least 1 and 
$\hat{f}(x), \hat{g}(x) < M$ for all $x$. 

Given a convex combination of the form $h_\lambda = (1-\lambda)\hat{f} + \lambda\hat{g}$, $\delta(\lambda):=\delta(h_\lambda)$ is positive, and we have $h_\lambda < M$. The set $C$ of open intervals $J_\lambda = \left(\frac{\lambda - \delta(\lambda)}{(1+M)4c}, \lambda + \frac{\delta(\lambda)}{(1+M)4c}\right)$ forms an open cover of $[0,1]$. Since $[0,1]$ is compact, $C$ has a finite sub-cover defined using $\lambda_1 < \lambda_2 < \ldots < \lambda_n$ such that each two consecutive intervals $J_{\lambda_i}$ and $J_{\lambda_{i+1}}$ intersect. Then,
\begin{center}
\begin{align}
\begin{split}
     \lambda_{i+1} - \lambda_i & \leq \frac{(\delta(\lambda_i)+\delta(\lambda_{i+1}))}{(1+M)4c}  \\
     & \leq \frac{\min\{\delta(\lambda_i), \delta(\lambda_{i+1})\}(1+M)}{(1+M)4c} \\
     & \leq \frac{\min\{\delta(\lambda_i),\delta( \lambda_{i+1})\}}{4c}.
\end{split}     
\end{align}
\end{center}

Since $c = ||\hat{f} -\hat{g}||_\infty$, $||h_{\lambda_i} - h_{\lambda_{i+1}} ||_\infty = c(\lambda_{i+1}-\lambda_i)=: c_\lambda.$ It follows that $||h_{\lambda_i} - h_{\lambda_{i+1}} ||_\infty \leq \min\{\delta(\lambda_i),\delta(\lambda_{i+1})\}/4.$

Let $F(\lambda_i)$ denote the sublevel set filtration of $h_{\lambda_i}.$
Since the filtrations are sublevel set filtrations of the same complex, we have that $F(\lambda_i)_{\alpha} \subset F(\lambda_{i+1})_{\alpha+c_\lambda}$, and $F(\lambda_{i+1})_\alpha \subset F(\lambda_i)_{\alpha+c_\lambda}$. Since the inclusion is harmonic-preserving, we deduce 
\[d_{HI}(F(\lambda_{i+1}),F(\lambda_{i})) \leq ||h_{\lambda_i} - h_{\lambda_{i+1}} ||_\infty = c(\lambda_{i+1}-\lambda_i) \leq \frac{\min\{\delta(\lambda_i),\delta(\lambda_{i+1})\}}{4c}.\]
It follows from~\Cref{lemma:simple-stability} that $$d_B(\chd(F_{\lambda_i}), \chd (F_{\lambda_{i+1}})) \leq d_{HI}(F(\lambda_i),F(\lambda_{i+1})) \leq ||h_{\lambda_i} - h_{\lambda_{i+1}}||_\infty.$$

The concatenation of these inequalities follows the proof of Interpolation Lemma in~\cite{cohen2005stability} verbatim leading to the statement of the theorem.
\end{proof}

\subsection{Stability Example}

We consider the same example we presented for instability, see~\cref{fig:stableexample}. 
For the left filtration, at time $t_4$, the cycle $z_1 = t_4+t_2-t_1$ is created and represents the top bar. 
At time $t_5$ the dimension of the harmonic chains increases, and one can see that the cycle $z_2 = t_2-t_1+t_3-t_5$ is harmonic at $t_5$ and survives up to $t_9$.
Among the cycles that exist at $t_5$, a 1-dimensional subspace survives $t_8$, this space is generated by $z_2$.
At $t_7$, all edges are present, therefore there has to be a cycle that survives until $t_{10}$. 
We see that $z_3 = \partial(t_{10}) - 1/2 \partial (t_8 - t_9 )$ has this property and serves as the harmonic representative.

For the right filtration, at time $t_4$, the cycle $\bar{z}_1 = -t_1+t_2+t_4-t_3$ is created and survives up to $t_9$. 
At time $t_5$, $\bar{z}_2 = t_5+t_4-t_3$ is created and  survives up to $t_8$. 
Finally, at time $t_7$, $\bar{z}_3=z_3$ is created and survives up to $t_{10}$.

We observe that the first bar on top on the left can be matched to the second bar on the right, and the second bar in the left can be 
matched to the first bar on the right. The bottom bar is unchanged in both barcodes.

\begin{figure}
    \centering
    \includegraphics{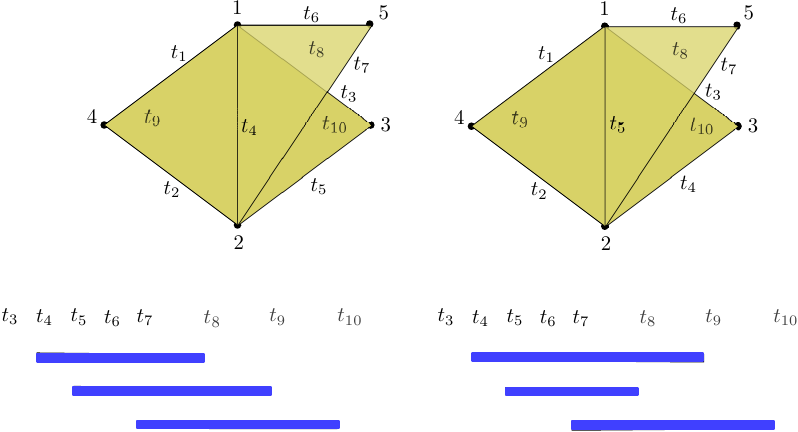}
    \caption{The canonical barcodes of harmonic chains (in blue) are stable.}
    \label{fig:stableexample}
\end{figure}
\section{Future Work}
\label{sec:discussion}

We introduce a canonical barcode of harmonic chains as a novel barcode from a persistence filtration that captures geometric and topology information of data. 
In the future, such a barcode can be utilized in place of or alongside the persistence barcode. 
We will perform experiments comparing our new {\hcb} with the persistence barcode in applications, such as topology-based feature vectorization and classification.   
We also hope to perform in-depth investigation of its interpretability.

\bibliographystyle{abbrv}
\bibliography{refs-harmonic}
\appendix  
\section{Algorithmic Correctness for Computing Canonical Harmonic Chain Barcode}
\label{sec:correctness}

In this section, we prove that the algorithm for computing the canonical {\hcb} in~\cref{ssec:computation} is correct. 
We assume that in a given filtration $F$, we add one simplex at a time.
Under this assumption, we compute the canonical {\hcb} by sweeping time from $-\infty$ to $\infty$. 
At any time $t$, we maintain a set of cycles $R(t)$ satisfying the invariant that:
\begin{itemize}[noitemsep,leftmargin=*]
    \item The set $R(t)$ contains cycles realizing a lexicographical maximal (lex-maximal) sequence of independent span length for $Z(K_t) \subset Z(K)$.
\end{itemize}

The following Lemma describes the change from $R(t_i)$ to $R(t_{i+1})$.

\begin{lemma}
\label{lemma:greedy}
    Assume that the harmonic representatives $R(t_{i})$ satisfy the invariant at time $t_i$. Then, at time $t_{i+1}$, if the dimension of the harmonic chains increases, to obtain $R(t_{i+1})$ satisfying the invariant, it is enough to add to $R(t_{i})$ the lex-maximal set of independent spans born at $t_i$.    
\end{lemma}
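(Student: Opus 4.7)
Set $V := Z(K_{t_i})$, $V' := Z(K_{t_{i+1}})$, and $k := \dim V$. Since the harmonic dimension increases at $t_{i+1}$, $\dim V' = k+1$, and the cycles born at $t_{i+1}$ are exactly those in $V' \setminus V$. Define $\mu^* := \max\{|span(v)| : v \in V' \setminus V\}$ and let $z^*$ achieve this maximum. Because only one new dimension appears, the lex-maximal set of independent spans born at $t_{i+1}$ is the singleton $\{z^*\}$ (by \Cref{thm:uniqueness}, the length $\mu^*$ is independent of the choice of maximizer). The task thus reduces to showing that $R(t_{i+1}) := R(t_i) \cup \{z^*\}$ realizes a lexicographically maximal sequence of independent span lengths for $V'$.

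I would first record the single driving fact, a direct consequence of the definition of $\mu^*$: every cycle $v \in V'$ with $|span(v)| > \mu^*$ must lie in $V$. Next, I would cast the goal as a lex-max-weight basis problem in the vector matroid on $V'$ (independence = linear independence, weight = span length) and invoke the standard matroid greedy theorem: iteratively picking the heaviest cycle independent of the picks so far yields a basis whose sorted weight sequence is lex-maximal. The plan is then to simulate greedy on $V'$ and verify that its output, as an unordered set, equals $R(t_i) \cup \{z^*\}$.

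The simulation would proceed in three phases. (i) As long as the next-heaviest unpicked element of $R(t_i)$ has weight strictly greater than $\mu^*$, it is the globally heaviest cycle in $V'$ independent of the picks so far: any heavier alternative would have weight $>\mu^*$ and thus lie in $V$ by the driving fact, which contradicts the invariant at $t_i$ that $R(t_i)$ is greedy's output on $V$. Greedy therefore picks that $R(t_i)$-element. (ii) At the transition, once the remaining $R(t_i)$-weights are all $\le \mu^*$, greedy selects a cycle in $V'\setminus V$ of weight $\mu^*$, and one may take this to be $z^*$, since $z^*$ is independent of all of $R(t_i)$. (iii) After $z^*$ is picked, any cycle independent of the picks must lie in $V$, and greedy finishes by picking the remainder of $R(t_i)$ in weight order, again by the invariant.

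The hard part is justifying phase (ii): I must rule out that some unpicked cycle in $V$ with weight $> \mu^*$ is still independent of the picks made so far. Since those picks form exactly the heaviest prefix of $R(t_i)$ and $R(t_i)$ is greedy's output on $V$, any cycle in $V$ independent of that prefix has weight at most the next $R(t_i)$-weight, which by the definition of the transition is $\le \mu^*$. Thus the whole argument reduces to careful bookkeeping combining the driving fact with the inductive invariant at $t_i$; minor tie-breaking issues dissolve under \Cref{thm:uniqueness}.
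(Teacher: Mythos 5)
Your reduction to a lex-max-weight basis problem and the greedy simulation are fine in outline (and phase (i) is correct), but there is a genuine gap at and after the transition: you never rule out a \emph{second} newly born cycle. Your phase (iii) assertion that ``after $z^*$ is picked, any cycle independent of the picks must lie in $V$'' is false as stated (e.g.\ $z^*+v$ with $v\in V$ outside the span of the picks is independent of the picks and lies in $V'\setminus V$), and the weaker statement you actually need --- that the maximum span length among cycles independent of the picks is attained by the next unpicked element of $R(t_i)$ --- does not follow from your ``driving fact'' plus the invariant: the driving fact only bounds new cycles by $\mu^*$, while the remaining $R(t_i)$-weights may be far below $\mu^*$. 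Indeed, the abstract ingredients you use are insufficient: take $V=\mathrm{span}(e_1)\subset V'=\mathrm{span}(e_1,e_2)$ with weights $w(ce_1)=1$ for $c\neq 0$, $w(e_1+e_2)=10=\mu^*$, $w(e_2)=5$. Then $R=\{e_1\}$ is lex-maximal for $V$, the driving fact holds, $z^*=e_1+e_2$, yet $R\cup\{z^*\}$ realizes $(10,1)$ while the lex-max sequence for $V'$ is $(10,5)$. So the lemma is not a formal consequence of matroid greediness; one must show that this weight configuration cannot arise for harmonic spans.

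That exclusion is precisely the core of the paper's proof, which your proposal omits. Since a single simplex $\sigma$ is inserted at $t_{i+1}$, every cycle of $Z(K_{t_{i+1}})\setminus Z(K_{t_i})$ has a nonzero coefficient on $\sigma$. Hence if a second new cycle $\tau'$ with $|span(\tau')|\le|span(\tau)|$ (where $\tau=z^*$ has coefficient $\alpha$ and $\tau'$ has coefficient $\beta$ on $\sigma$) were required in a lex-max realizing sequence, then $\chi=(\alpha/\beta)\tau'-\tau$ lies in $Z(K_{t_i})$; using that a cycle which acquires a nonzero coboundary never becomes a cocycle again, $\chi$ is harmonic from its earlier birth until at least the death of $\tau'$, so $|span(\chi)|>|span(\tau')|$ while $\tau'\in\mathrm{span}(\tau,\chi)$, a contradiction. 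This is what guarantees that the greedy run on $V'$ never needs more than one element outside $V$, i.e.\ that after $z^*$ the remaining picks can be taken from $R(t_i)$. Without this (or an equivalent harmonicity-based) argument your phases (ii)--(iii) do not go through; with it, your bookkeeping can be repaired.
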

\begin{proof}
Let $\sigma$ be the simplex inserted at time $t_{i+1}$. If $\sigma$ is not $d$- or $(d+1)$-dimensional then the set of $d$-cycles and harmonic $d$-cycles do not change. We then can set $R(t_{i+1})=R(t_{i})$.

Suppose $\sigma$ is $d$-dimensional. Then it might create new harmonic cycles. If this does not happen, we can set $R(t_{i+1})=R(t_{i})$. 
A new harmonic cycle is created if and only if a new $d$-homology class is born. In this case, the dimension of harmonic cycles increases. Note that $\har (K_{t_{i}}) \subset \har (K_{t_{i+1}})$ is a subspace of codimension 1.
    
Let $N$ be the maximum death time of all harmonic cycles generated at time $t_{i+1}$. $N-t_{i+1}$ is the maximum span length of these cycles and their span is determined by their death time. 
Let $\tau \in \har (K_{t_{i+1}})$ be a cycle that realizes the span $[t_{i+1},N)$. Set $R' = R(t_i) \cup \{ \tau \}$. We claim that $R'$ satisfies the invariant at time $t_{i+1}$. 

To prove the claim, we consider how the maximum span length sequence of $Z(K_{t_{i+1}})$, denoted as $\rho_1 \geq \rho_2 \geq \cdots$, is constructed. 
Recall that we first take a cycle with the maximum span length. If this cycle is in $Z(K_{t_{i}})$, then this span length will appear as span interval with maximum length in $R(t_i) \subset R'$ by the induction hypothesis. This will be again true for the second number, etc. Let $\rho_j$ be the first appearance of a number that is not in the maximum span length sequence for $Z(K_{t_{i}})$ (note that it can be that $\rho_j = \rho_{j-1}$, in this case, $\rho_j$ is one more repetition of the same number). A realizing cycle is in $Z(K_{t_{i+1}}) - Z(K_{t_{i}})$ and it has to have the maximum span length in this set. Therefore, $\rho_j = N-t_{i+1}$ and $\tau$ realizes this span length. Now we claim that the rest of the cycles in a sequence realizing the maximum span length sequence for $Z(K_{t_{i+1}})$ must come from $Z(K_{t_{i}})$ and hence their span lengths must appear among spans of $R(t_i) \subset R'$.

To prove the claim, let us assume for the sake of contradiction that a second cycle appearing in order of span lengths $\tau' \in Z(K_{t_{i+1}}) - Z(K_{t_{i}})$ has a span length in the sequence, and $|span(\tau')|\leq |span(\tau)|$. 
$\tau'$ must have a non-zero coefficient for $\sigma$, say $\beta$. Let $\alpha \neq 0$ be the coefficient of $\sigma$ in $\tau$. We can form the cycle $\chi = \alpha/ \beta \tau' - \tau \in Z(K_{t_{i}})$. When $\chi$ becomes non-harmonic, i.e., at the first $t$ such that $\delta_t(\chi)\neq 0$, at least one of $\tau'$ and $\tau$ must also become non-harmonic. Since the death time of $\tau$ is equal or larger than $\tau'$, the death time of $\chi$ is at least that of $\tau'$. Since $\chi$ already exists at time $t_{i}$ it must be that $|span(\chi)| > |span(\tau')|$. This contradicts the selection of $\tau'$ as it is in a space generated by cycles of larger span length $\tau$ and $\chi$. This finishes the argument.

If $\sigma$ is $(d+1)$-dimensional, the set of $d$-cycles $Z(K_t)$ does not change and we can set $R(t_{i+1}) = R({t_i})$.
\end{proof}

Now we consider an arbitrary filtration $F$. Let $\Sigma_i$ be the set of simplices inserted at time $i$. We can simulate $F$ using a filtration which inserts $\Sigma$ one simplex at a time $t_{i,j}=t_{i,j'}$. In other words, we apply~\Cref{lemma:greedy} one simplex at a time, and the difference betrween times is 0. We need, however, to consider all the possible orderings for insertions. For any fixed ordering,~\Cref{lemma:greedy}  applies. We need to take the ordering which leads eventually to the lex-maximal sequence of spans added at time $t_{i+1}$. We can obtain this sequence by computing the new harmonic chain space $A = \har{(K_{t_{i+1})}} - \har{(K_{t_i})}$. All these chains are born at $t_{i+1}$. We need to choose first the longest such span, then the next independent largest span, and so on. Since the span lengths in $A$ are determined by death times, this is what the algorithm of~\cref{ssec:computation} computes.

\end{document}